\DeclareMathOperator*{\argmax}{argmax} 
\DeclareMathOperator*{\argmin}{argmin} 
\DeclareMathAlphabet\mathrsfso      {U}{rsfso}{m}{n}
\newtheorem{lemma}{Lemma}
\newtheorem*{lemma*}{Lemma}
\newtheorem{definition}{Definition}
\newtheorem{proposition}{Proposition}
\newtheorem{theorem}{Theorem}
\newtheorem*{theorem*}{Theorem}
\newtheorem{corollary}{Corollary}[theorem]
\begin{document}

\title{On classical advice, sampling advice and complexity assumptions for learning separations}

\author{Jordi Pérez-Guijarro}
\date{\small SPCOM Group, Universitat Politècnica de Catalunya, Barcelona, Spain}

\maketitle

\abstract{
In this paper, we study the relationship between advice in the form of a training set and classical advice. We do this by analyzing the class $\mathsf{BPP/samp}$ and certain variants of it. Specifically, our main result demonstrates that $\mathsf{BPP/samp}$ is a proper subset of the class $\mathsf{P/poly}$, which implies that advice in the form of a training set is strictly weaker than classical advice. This result remains valid when considering quantum advice and a quantum generalization of the training set. Finally, leveraging the insights from our proofs, we identify both sufficient and necessary complexity-theoretic assumptions for the existence of concept classes that exhibit a quantum learning speed-up. We consider both the worst-case setting—where accurate results are required for all inputs—and the average-case setting.}

\section{Introduction}

The study of algorithms with access to advice is crucial in fields such as cryptography, where the aim is to establish protocols that remain secure even against adversaries with advice. However, the origin of these studies has its roots in non-uniform circuits. In particular, Karp and Lipton introduced the class $\mathsf{P/poly}$ in \cite{karp1982turing}. This class denotes the set of decision problems that can be efficiently solved with advice of polynomial-size.

The seminal work of Karp and Lipton \cite{karp1982turing} not only introduced this class but also demonstrated that, although classical advice is quite helpful, it has its limits. This is evident in the fact that classes $\mathsf{NP}$ and $\mathsf{EXP}$ are not contained in $\mathsf{P/poly}$, assuming the conjecture that the polynomial hierarchy does not collapse.
Another crucial result concerning this class is Adleman's theorem \cite{adleman1978two}, which states that randomness is not useful when classical advice is given, i.e., $\mathsf{BPP}$ is a proper subset of $\mathsf{P/poly}$. Interestingly, to encompass quantum algorithms and quantum advice, the class $\mathsf{P/poly}$ was extended, resulting in the definition of $\mathsf{BQP/qpoly}$ \cite{nishimura2004polynomial}. The quantum advice is given by a quantum state formed by a polynomial number of qubits. A separation between quantum and classical advice for relational problems is proved unconditionally in \cite{aaronson2023qubit}.

On another front, as machine learning gains widespread adoption, there is a growing focus on exploring the computational limits of these algorithms. In particular, this arises as a crucial condition to understand the possible applications of quantum computers in machine learning. 

Importantly, in the field of quantum machine learning, the training set, i.e., the advice that is given, does not restrict to classical pairs $(x,f(x))$. Instead, it explores various quantum generalizations of the training set. For instance, a common generalization involves a superposition of states $\ket{x}\ket{f(x)}$. In \cite{servedio2004equivalences}, an advantage over its classical counterpart is demonstrated when efficient learners are considered, i.e., learners that run in polynomial time. Furthermore, if we allow access to the circuit that generates the quantum training set, a quantum separation in sample complexity exists without the need to restrict to efficient learners \cite{salmon2023provable}. Otherwise, both the classical and quantum sample complexities are equal up to a constant \cite{arunachalam2018optimal}. Other works, as \cite{huang2021information}, consider the task of learning functions of the form $f(x)=\Tr(O \mathcal{E}(\ket{x}\bra{x}))$, where the quantum advice is given as $N$ calls to the unknown CPTP map $\mathcal{E}$. The authors show that there is no significant difference with a classical training set in the average-case. However, there is a significant performance disparity between the two in the worst-case scenario. Similar differences are also studied in \cite{gyurik2023limitations}, where the training set is of the form $\{(\rho_i^{\otimes k}),y_i\}_i$. Particularly, they demonstrate a separation between techniques that use a \textit{fully-quantum} protocol, i.e., a technique that allows for adaptive measurements during the learning process, and \textit{measure-first} protocols, shedding light on the limitations of techniques such as classical shadow tomography \cite{huang2020predicting,Koh2022classicalshadows,hu2023classical}.

While these extensions of the training set are noteworthy, quantum machine learning also places significant emphasis on scenarios involving a classical training set. Importantly, certain findings suggest the presence of quantum speed-ups in this context \cite{liu2021rigorous}. However, as illustrated in \cite{huang2021power}, achieving a (computational) quantum speed-up for a function is not a sufficient condition for a learning separation. This discrepancy arises from the fact that access to a classical training set can simplify seemingly complex tasks. Nonetheless, this mismatch disappears when considering problems for which the training set can be efficiently generated by a classical algorithm \cite{perez2023relation}.

To explore the limits of classical algorithms when provided with a training set, \cite{huang2021power} introduces the class $\mathsf{BPP/samp}$, denoting the set of languages efficiently decidable by a randomized algorithm with advice from a classical training set. The authors establish that $\mathsf{BPP}$ is a proper subset of $\mathsf{BPP/samp}$, which, in turn, is a subset of $\mathsf{P/poly}$. The first inclusion follows from undecidable unary languages, while the second inclusion is supported by an argument from Adleman's theorem. In \cite{marshall2024bounded}, other properties of the class $\mathsf{BPP/samp}$ are discussed, such as the equality $\mathsf{BPP/samp} = \mathsf{P/samp}$. Interestingly, in \cite{gyurik2023exponential}, these classes are extended to distributional problems, i.e., pairs of languages and distributions. In \cite{gyurik2023exponential}, the authors also identify several compelling examples of learning speed-ups, some of which do not rely on the classical generation of the training set.

In this work, we study the class $\mathsf{BPP/samp}$ and certain variants of it in order to understand the effects on the computational power of restricting the advice to the form of a training set. Specifically, we establish the result $\mathsf{BPP/samp}\,\subsetneq\,\mathsf{P/poly}$, demonstrating that such a restriction reduces computational power. Additionally, we explore scenarios where the distribution is fixed. In this context, we establish also a proper inclusion, indicating that sampling advice from a fixed distribution is inherently weaker than arbitrary classical advice. Furthermore, this result also holds true when we consider quantum advice and a quantum generalization of the training set. Lastly, leveraging these results and their associated proof techniques, we establish sufficient and necessary complexity assumptions for the existence of a concept class $\mathcal{C}$ that is learnable by a quantum procedure but not classically learnable. We present this result for both worst-case and average-case learning scenarios.

\section{Notation}

For the sake of completeness, we start by providing the formal definitions for the complexity classes used in our work. The classes $\mathsf{BPP/samp}$($D$), $\mathsf{BPP/samp(All)}$, $\mathsf{BQP/samp}$($D$), $\mathsf{BQP/samp(All)}$, $\mathsf{BQP/qsamp}$, and $\mathsf{BQP/qsamp}$($D$) are introduced for the first time in this work.

\begin{definition}
    $\mathsf{P/poly}$ is the class of languages decidable by a Turing Machine (TM) $M$ running in polynomial time with access to a sequence of strings $\{a_n\}_{n\in \mathbb{N}}$ of polynomial size. In other words,
    \begin{equation}
        M(x,a_{\ell(x)})=\mathbbm{1}\{x\in L\}
    \end{equation}
    for any binary string $x$, where $\ell(x)$ denotes the length of string $x$.
\end{definition}
The class $\mathsf{BQP/mpoly}$ is analogous, but it allows for a probability of error of 1/3 over the randomness of the quantum algorithm\footnote{The class $\mathsf{BQP/poly}$ differs from $\mathsf{BQP/mpoly}$ in that it includes the additional restriction that the quantum algorithm must produce a result with probability at least $\frac{2}{3}$, independently of the advice given.}. Next, we present the classes $\mathsf{BPP/samp}$($D$), $\mathsf{BPP/samp(All)}$ and $\mathsf{BPP/samp}$.
\begin{definition}
    $\mathsf{BPP/samp}$(D) is the class of languages for which there exists a probabilistic TM $M$ running in polynomial time and a polynomial $p(n)$, such that for the fixed sequence of distributions $D=\{D_n\}_{n\in \mathbb{N}}$, 
    \begin{equation}
        \mathbb{P}\left(M(x,\mathcal{T}_{p(\ell(x))})=\mathbbm{1}\{x\in L\}\right)\geq \frac{2}{3}
    \end{equation}
    holds for any fixed binary string $x$, where $\mathcal{T}_{p(n)}=\{(x_i,\mathbbm{1}\{x_i\in L\})\}_{i=1}^{p(n)}$, and $x_i\sim D_n$. The probability is taken over both the internal randomness of $M$ and the training set.
\end{definition}
\begin{definition}
    $\mathsf{BPP/samp(All)}$ is the class of languages for which there exists a probabilistic TM $M$ running in polynomial time, a sequence of distributions $\{D_n\}_{n\in \mathbb{N}}$ satisfying $\mathrm{supp}(D_n)\subseteq \{0,1\}^n$, and a polynomial $p(n)$, such that for any fixed binary string $x\in \{0,1\}^*$,
    \begin{equation}
        \mathbb{P}\left(M(x,\mathcal{T}_{p(\ell(x))})=\mathbbm{1}\{x\in L\}\right)\geq \frac{2}{3}
    \end{equation}
    where $\mathcal{T}_{p(n)}=\{(x_i,\mathbbm{1}\{x_i\in L\})\}_{i=1}^{p(n)}$ and $x_i\sim D_n$. The probability is taken over both the internal randomness of $M$ and the training set. 
\end{definition}

\begin{definition}
    $\mathsf{BPP/samp}$ is the class of languages for which there exists a probabilistic TM $M$ running in polynomial time, a sequence of distributions $\{D_n\}_{n\in \mathbb{N}}$ satisfying $\mathrm{supp}(D_n)\subseteq \{0,1\}^n$, a probabilistic polynomial time TM $G$ satisfying that the outcome of $G(1^n)$ is distributed according to $D_n$,  and a polynomial $p(n)$, such that for any fixed binary string $x\in \{0,1\}^*$,
    \begin{equation}
        \mathbb{P}\left(M(x,\mathcal{T}_{p(\ell(x))})=\mathbbm{1}\{x\in L\}\right)\geq \frac{2}{3}
    \end{equation}
    where $\mathcal{T}_{p(n)}=\{(x_i,\mathbbm{1}\{x_i\in L\})\}_{i=1}^{p(n)}$ and $x_i\sim D_n$. The probability is taken over both the internal randomness of $M$ and the training set. 
\end{definition}
Clearly, $\mathsf{BPP/samp}\,\subseteq\,\mathsf{BPP/samp(All)}$, as the definition of $\mathsf{BPP/samp}$ differs from that of $\mathsf{BPP/samp(All)}$ only by a restriction on the sequence of distributions considered. Similarly, $\mathsf{BPP/samp}$($D$)$\,\subseteq\,$$\mathsf{BPP/samp(All)}$. Regarding the relation between $\mathsf{BPP/samp}$($D$) and $\mathsf{BPP/samp}$, as we show in the next section, there exist sequences of distributions such that $\mathsf{BPP/samp}$($D$)$\,\subseteq\,$$\mathsf{BPP/samp}$, and also $\mathsf{BPP/samp}$($D$)$\,\not\subseteq\,$$\mathsf{BPP/samp}$.

The classes $\mathsf{BQP/samp}$($D$), $\mathsf{BQP/samp(All)}$ and $\mathsf{BQP/samp}$ follow naturally by replacing the probabilistic Turing machine with a quantum algorithm running in polynomial time \footnote{For the definition of $\mathsf{BQP/samp}$ TM $G$ is not substituted by a quantum algorithm.}. Moreover, when considering quantum algorithms, the possibility of utilizing a quantum state as advice emerges, representing a more general form of advice compared to a classical string. In particular, in line with the definition of $\mathsf{P/poly}$, the quantum algorithm has access to a sequence of quantum states $\{\ket{\psi}_n\}_{n\in \mathbb{N}}$, each formed by a polynomial in $n$ number of qubits. This type of advice gives rise to the class $\mathsf{BQP/qpoly}$. 

Similarly, the classical training set can be extended to a quantum state, as shown in \cite{bshouty1995learning}. In this context, the generalization constrains the quantum states to be of the form,
    \begin{equation}
        \ket{\psi_n}=\left(\sum_{x\in \{0,1\}^n} \sqrt{\mathbb{P}_{x\sim D_n}(x)}\ket{x}\otimes |\mathbbm{1}\{x\in L\}\rangle \right)^{\otimes\, p(n)}
    \end{equation}
where $D_n$ is a probability distribution over $\{0,1\}^n$, and $p(n)$ is a polynomial. This quantum state serves as a generalization of the training set, since measuring in the computational basis the state $\sum_{x\in \{0,1\}^n} \sqrt{\mathbb{P}_{x\sim D_n}(x)}\ket{x}\otimes |\mathbbm{1}\{x\in L\}\rangle$ yields a pair $(x_i,\mathbbm{1}\{x_i\in L\})$, where $x_i\sim D_n$. Therefore, by using this quantum state, we can generate a training set. Allowing advice in this form results in the class $\mathsf{BQP/qsamp(All)}$. The classes $\mathsf{BQP/qsamp}$$(D)$ and $\mathsf{BQP/qsamp}$ are obtained by adding the corresponding restrictions on the distributions considered. That is, for $\mathsf{BQP/qsamp}$$(D)$, by fixing the distribution, and for $\mathsf{BQP/qsamp}$, by requiring the existence of an efficient classical algorithm that can sample from the sequence of distributions.

Finally, we introduce the notation and concepts related to computational learning theory employed in this paper. In particular, we examine a sequence of concept classes $\mathcal{C}=\{\mathcal{C}_n\}_{n\in \mathbb{N}}$, where each set $\mathcal{C}_n$ consists of Boolean functions $c_n^{(j)}:\{0,1\}^n\rightarrow \{0,1\}$. In this work, we consider two forms of learning: worst-case learning, where precise outcomes must be produced for all values of $x$, and average-case learning, where correct outcomes must be produced for the most likely inputs. The formal definitions are given below.

\begin{definition}
 A concept class $\mathcal{C}$ is classically (quantum) worst-case learnable if there exists an efficient classical (quantum) algorithm $\mathcal{A}$, a sequence of distributions $D=\{D_n\}_{n\in\mathbb{N}}$, and a polynomial $p(n)$, such that the algorithm, given input $x\in \{0,1\}^n$ and training set $\mathcal{T}_{p(n)}^{(j)}=\{(x_i,c_n^{(j)}(x_i))\}_{i=1}^{p(n)}$ with $x_i\sim D_n$, outputs $\mathcal{A}(x,\mathcal{T}_{p(n)}^{(j)})\in \{0,1\}$ satisfying
    \begin{equation}\label{equation_worst_case_learning}
        \mathbb{P}\left(\mathcal{A}(x,\mathcal{T}_{p(n)}^{(j)})=c_n^{(j)}(x)\right)\geq 2/3
    \end{equation}
for all $x\in\{0,1\}^n$, $c_n^{(j)} \in \mathcal{C}_n $, and $n\in \mathbb{N}$. The probability is taken with respect to the training set $\mathcal{T}_{p(n)}^{(j)}$ and the randomness of algorithm $\mathcal{A}$. 
\end{definition}

\begin{definition}\label{average_case}
    A concept class $\mathcal{C}$ is classically (quantum) average-case learnable for a sequence of distributions $D$ if there exists an efficient classical (quantum) algorithm $\mathcal{A}$, and a multivariate polynomial $p(a,b)$, such that the algorithm $\mathcal{A}$, given input $x\in \{0,1\}^n$, string $1^{m}$, and training set $\mathcal{T}_{p(n,m)}^{(j)}=\{(x_i,c_n^{(j)}(x_i))\}_{i=1}^{p(n,m)}$ with $x_i\sim D_n$, outputs $\mathcal{A}(x,\mathcal{T}_{p(n,m)}^{(j)},1^{m})\in \{0,1\}$ satisfying
        \begin{equation}\label{condition_definition_1}
            \mathbb{E}_{x\sim D_n,\mathcal{T}_{p(n,m)}^{(j)},\textnormal{ }\mathcal{A}}\left[ \mathbbm{1} \left\{\mathcal{A}(x,\mathcal{T}_{p(n,m)}^{(j)},1^{m})=c_n^{(j)}(x)\right\} \right]\geq 1-\frac{1}{m}
        \end{equation}
    for all $n,m\in \mathbb{N}$, and $c_n^{(j)} \in \mathcal{C}_n $. The expected value is taken with respect to $x\sim D_n$, the training set $\mathcal{T}_{p(n,m)}^{(j)}$, and the randomness of the algorithm $\mathcal{A}$. 
\end{definition}


The algorithm $\mathcal{A}$ can be understood as the concatenation of a learning algorithm that outputs some hypothesis $h_n:\{0,1\}^n \rightarrow \{0,1\}$ and an algorithm that evaluates it at input $x$. Regarding Definition \ref{average_case}, as we show in Appendix \ref{appendix_equivalence_of_learning_def}, it can be re-expressed by changing condition \eqref{condition_definition_1} to 
\begin{equation}\label{auxiliary_form}
    \mathbb{P}_{x\sim D_n}\left( \mathbb{P}_{\mathcal{T}_{p(n,m)}^{(j)},\textnormal{ }\mathcal{A}}\left(\mathcal{A}(x,\mathcal{T}_{p(n,m)}^{(j)},1^{m})=c_n^{(j)}(x)\right)\geq \frac{2}{3}\right)\geq 1-\frac{1}{m}
\end{equation}
where the external probability is taken with respect to $x\sim D_n$, and the internal probability with respect to the training set $\mathcal{T}_{p(n,m)}^{(j)}$, and the randomness of algorithm $\mathcal{A}$. With this in mind, we see that Definition \ref{average_case} coincides with the definition of a PAC-learnable concept class—using a hypothesis class that can be evaluated efficiently—as given in \cite{gyurik2023exponential}. Finally, the 2/3 appearing in both \eqref{equation_worst_case_learning} and \eqref{auxiliary_form} is not essential. That is, it can be replaced by any constant strictly greater than 1/2, or by imposing the constraint that, given $\delta$, the probability must be greater than $1 - \delta$ \footnote{In this case, the training set size must depend on $1/\delta$.} without changing the definitions. This follows from using a majority vote. We choose this formulation for simplicity of notation.

\section{Sampling advice and Classical advice}\label{section_3}

In this section, we present the results that relate sampling advice with classical advice. We begin with the simplest case, that is, the relation between $\mathsf{BPP/samp(All)}$ and $\mathsf{P/poly}$.

\begin{theorem}\label{theorem_1}
    $\mathsf{P/poly}$ $\,\subseteq\,$ $\mathsf{BPP/samp(All)}$, which implies that 
    \begin{equation}
        \mathsf{P/poly} \,=\, \mathsf{BPP/samp(All)}.
    \end{equation}
\end{theorem}

\begin{proof}
    If $L\in \,\mathsf{P/poly}$, then there exists a sequence of strings $\{a_n\}_{n\in \mathbb{N}}$ with $\ell(a_n) = q(n)$, where $q(n)$ denotes a polynomial, and a TM $M$ running in polynomial time such that $M(x,a_{\ell(x)})=\mathbbm{1}\{x\in L\}$. The idea of the proof is to encode the string $a_n$ into a distribution $D_n$ such that, with a polynomial number of samples, we can decode with high probability $a_n$.

    For convenience, we define the distribution over the set of integers $\{0, \cdots, 2^n-1\}$ instead of over the set $\{0,1\}^n$. The encoding is quite straightforward. At position $x=0$ of the distribution, we encode the first bit of $a_n$ using the following rule: $\mathbb{P}_{x\sim D_n}(0)=p_0$ if the first bit is zero and $\mathbb{P}_{x\sim D_n}(0)=2p_0$ if it is one. At position $x=1$, we encode the second bit, and so on. This means,
    \begin{equation}
        \mathbb{P}_{x\sim D_n}(x)= \left\{\begin{matrix}
0\text{ if }x\geq \ell(a_n) \\ 
(a_n(x)+1)\,p_0 \text{ if }x< \ell(a_n) 
\end{matrix}\right.
    \end{equation}
    where $a_n(x)$ denotes the $x^{th}$ bit of string $a_n$. The constant $p_0$ can be determined using the relation:
    \begin{equation}
        1=\sum_{x=0}^{2^n-1} \mathbb{P}_{x\sim D_n}(x) = p_0 \left( q(n)+w(a_n)\right) 
    \end{equation}
    where $w(a_n)$ denotes the Hamming weight of string $a_n$. Consequently,
    \begin{equation}\label{bound_p_0}
        \frac{1}{2q(n)}\leq p_0\leq \frac{1}{q(n)}
    \end{equation}
    Next, for decoding the advice efficiently, we apply the following criterion:
    \begin{equation}
        \hat{a}_n(z)=\argmin_{a\in\{0,1\}} |aM+(1-a)m-\sum_{i=1}^N \mathbbm{1}\{x_i=z\} |
    \end{equation}
    where $x_1,\cdots,x_N$ denote i.i.d. realizations of distribution $D_n$, $M=\max_{z}\{\sum_{i=1}^N \mathbbm{1}\{x_i=z\}\}$, and $m=\min_{z}\{\sum_{i=1}^N\mathbbm{1}\{x_i=z\}:\sum_{i=1}^N \mathbbm{1}\{ x_i=z\}>0\}$.
    Using this criterion, if the error in the empirical probabilities, i.e., $\left|\mathbb{P}_{x\sim D_n}(z)-\frac{1}{N}\sum_{i=1}^N \mathbbm{1}\{x_i=z\}\right|$, is lower than, for instance, $p_0/10$ for all $z\in [0:q(n)-1]$, then the decoded string is correct.

    \smallskip
    By applying Hoeffding's inequality, we have that for $N=\frac{1}{2\epsilon^2} \log\frac{2}{\delta}$,
    \begin{equation}
        \mathbb{P}\left(\left| \frac{1}{N}\sum_{i=1}^N{\mathbbm{1}\{x_i=z\}} -\mathbb{P}_{x\sim D_n}(z)  \right|\leq \epsilon\right)\geq 1-\delta
    \end{equation}
     Therefore, using the union bound, and     taking $\epsilon= p_0/10$, and $\delta=1/(3q(n))$, it follows that 
    \begin{equation}
        \mathbb{P}\left( \bigcup_{z\in [0:q(n)-1]} \left\{ \left| \frac{1}{N}\sum_{i=1}^N{\mathbbm{1}\{x_i=z\}} - \mathbb{P}_{x\sim D_n}(z) \right|> \frac{p_0}{10} \right\}\right)< \frac{1}{3}
    \end{equation}
    if the number of samples in the training set is 
    \begin{align}
        N =\frac{50}{p_0^2} \log\left( 6 \,q(n)\, \right) &\leq 200 \, q(n)^2 \log\left( 6 \,q(n)\, \right) \nonumber \\&\leq 1200q(n)^3 
    \end{align}
    where the first inequality follows from \eqref{bound_p_0}, and the second inequality uses $\log x \leq x $.
    
    That is, the probability of incorrectly decoding the string $a_n$ is at most $1/3$ when $N=1200\,q(n)^3$. Therefore, by taking a polynomial number of samples, we can generate an estimate of the advice $\hat{a}_n$ such that $\mathbb{P}(\hat{a}_n=a_n)\geq 2/3$. Consequently, there exists a TM $M'$, a distribution, and a polynomial $p(n)$ such that
    \begin{equation}
        \mathbb{P}\left(M'(x,\mathcal{T}_{p(\ell(x))})=\mathbbm{1}\{x\in L\}\right)\geq \frac{2}{3}
    \end{equation}
    That is, $L\in\,\mathsf{BPP/samp(All)}$.

\end{proof}

Analogously, by the same proof, $\mathsf{BQP/mpoly}$ $\subseteq$ $\mathsf{BQP/samp(All)}$, implying the equality $\mathsf{BQP/mpoly}= \mathsf{BQP/samp(All)}$. Interestingly, the equality between these classes arises not because the pairs $(x_i,\mathbbm{1}\{x_i\in L\})$ contained in the training set are inherently powerful, but rather because we can encode the advice in the distribution used for $x$. One rapidly realizes that the same trick does not work if the distribution is fixed. In that setting, one might wonder if there is some distribution such that $\mathsf{BPP/samp}$($D$) equals $\mathsf{P/poly}$, which is equivalent to asking if classical advice is equivalent to sampling advice distributed according to $D$. The subsequent result provides a negative answer to this question.
\begin{theorem}\label{theorem_distribution_fixed}
    For any sequence of distributions $D$, $\mathsf{BPP/samp}$\textnormal{($D$)}$\,\subsetneq$ $\mathsf{P/poly}$.
\end{theorem}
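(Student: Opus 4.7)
The statement asserts two things: the inclusion $\textnormal{\textbf{BPP/samp}}(D) \subseteq \textnormal{\textbf{P/poly}}$ and its strictness. The inclusion is a routine Adleman-style amplification: run the $\textnormal{\textbf{BPP/samp}}(D)$ machine $O(n)$ times and take a majority vote to push the success probability above $1 - 2^{-2n}$; then take a union bound over the $2^n$ inputs of length $n$ to fix a single training set and random string that decide every length-$n$ input correctly. This fixed pair has polynomial size and serves as the $\textnormal{\textbf{P/poly}}$ advice.

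For strictness, I will exhibit a sparse language $L \in \textnormal{\textbf{P/poly}} \setminus \textnormal{\textbf{BPP/samp}}(D)$ by diagonalising against an enumeration $\{(M_i, p_i)\}_{i \geq 1}$ of all (probabilistic polynomial-time TM, polynomial) pairs. The driving observation is that whenever $y \in \{0,1\}^n$ satisfies $D_n(y) \leq 1/(7 p(n))$, a union bound gives that $y$ is absent from a size-$p(n)$ i.i.d.\ sample with probability at least $6/7$, and conditional on that absence the joint distribution of $\mathcal{T}$ --- and therefore of $M(y, \mathcal{T})$ --- does not depend on $L(y)$. Since $\min_x D_n(x) \leq 2^{-n}$, such a $y$ exists for every sufficiently large $n$, regardless of what $D$ is. At stage $i$, I pick a fresh length $n_i > n_{i-1}$ large enough that $2^{-n_i} < 1/(7 p_i(n_i))$, let $x_{n_i}$ be any minimiser of $D_{n_i}$, compute the mode $b^*_i \in \{0,1\}$ of the output of $M_i(x_{n_i}, \mathcal{T})$ when all labels in $\mathcal{T}$ are zero, and then set $L(x_{n_i}) := 1 - b^*_i$. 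Declaring $L$ to be zero on every other string, the bound
\[
\mathbb{P}\bigl(M_i(x_{n_i}, \mathcal{T}) = L(x_{n_i})\bigr) \;\leq\; \tfrac{1}{2}\cdot 1 + 1 \cdot \tfrac{1}{7} \;=\; \tfrac{9}{14} \;<\; \tfrac{2}{3}
\]
follows by conditioning on the event ``$x_{n_i}$ is missing from $\mathcal{T}$'' (contributing at most $1/2$, since $L(x_{n_i})$ was chosen to be the minority output there) and on its complement (contributing at most $1/7$). Hence $(M_i, p_i)$ fails to witness $L \in \textnormal{\textbf{BPP/samp}}(D)$. Since $L$ has at most one element per length, it is sparse and therefore lies in $\textnormal{\textbf{P/poly}}$ via length-$n$ advice equal to $(x_n, L(x_n))$ when $n = n_i$ for some $i$ and empty otherwise.

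The single conceptual step is the ``invisibility'' of super-polynomially rare strings to sampling advice from a fixed $D$; once that is in hand, the construction is bookkeeping. I foresee no real obstacle: the constant gap $2/3 - 9/14 = 1/42$ is absolute, the lengths $n_i$ are strictly increasing so successive stages cannot interfere with one another, and both $D$ and $M_i$ are available when $b^*_i$ is computed, so the definition of $L$ is well-posed.
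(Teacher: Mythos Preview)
Your proof is correct and gives the same separation, but by a different and more direct route than the paper's.

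The paper proves Theorem~\ref{theorem_distribution_fixed} by first establishing a general transfer principle (Lemma~\ref{lemma_1}): any concept class that is (i) not worst-case classically learnable, (ii) pointwise efficiently computable, and (iii) of at most exponential size, yields a language in $\textnormal{\textbf{P/poly}}\setminus\textnormal{\textbf{BPP/samp}}(D)$, via a diagonalisation that selects, for each pair $(\mathcal{A},p)$ and each length in turn, a concept on which that pair fails. It then exhibits the point-function class $\mathcal{C}_n=\{x\mapsto\mathbbm{1}\{x=j-1\}\}_{j}$ as a concrete witness (Lemma~\ref{lemma_2}). Your construction is essentially the inlining of these two lemmas: your sparse language is precisely a language of the form $L(\mathcal{C})$ for the point-function class, and your choice of $x_{n_i}=\argmin_x D_{n_i}(x)$ together with the adversarial bit $1-b_i^*$ is the concrete instantiation of ``pick a concept in the error set $\mathcal{E}_{n_i}(\mathcal{A}_i,D,p_i)$''.

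Your approach is shorter and entirely self-contained; the paper's modular decomposition is longer here but is reused verbatim for the quantum analogue (Lemma~\ref{lemma_3}) and, more importantly, drives the later results on learning separations (Proposition~1, Theorems~3 and~4), where the language-from-concept-class construction is exactly what is needed.

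One small point of care: when you write ``contributing at most $1/2$, since $L(x_{n_i})$ was chosen to be the minority output there'', the word \emph{there} refers to the conditional distribution given absence, whereas $b_i^*$ was defined as the mode under the unconditional all-zero-label distribution. The cleanest fix is to bound $\mathbb{P}(\text{correct}\wedge\text{absent})\le\mathbb{P}(M_i=1-b_i^*\text{ in the all-zero scenario})\le 1/2$ via the obvious coupling (on the event ``absent'' the two training sets coincide), which still gives $9/14$; alternatively just define $b_i^*$ as the conditional mode from the start. Either way the argument goes through unchanged.
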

The proof of this theorem follows trivially from the combination of Lemma \ref{lemma_1} and Lemma \ref{lemma_2}, which we now state. 
\begin{lemma}\label{lemma_1}
    If there exists a concept class $\mathcal{C}$ that satisfies 
    \noindent

    \begin{enumerate}[$(i)$]
        \item $\mathcal{C}$ is not classically worst-case learnable. 
        
        \item All Boolean functions $c_n^{(j)}$ can be efficiently computed by a classical algorithm, i.e., there exists an efficient classical algorithm such that given $j$ and $x\in \{0,1\}^n$ outputs $c_n^{(j)}(x)$ in polynomial time.
    
        \item $|\mathcal{C}_n|\leq 2^{q(n)}$ for all $n\in\mathbb{N}$, where $q(n)$ is a polynomial. 
    \end{enumerate}

    \smallskip\smallskip\noindent
    then $\mathsf{BPP/samp}\textnormal{($D$)}\,\subsetneq\,$$\mathsf{P/poly}$ for any sequence of distributions $D=\{D_n\}_{n\in \mathbb{N}}$.
\end{lemma}

\begin{proof} The idea of this proof is to construct from the concept class $\mathcal{C}$ a language $L(\mathcal{C})$ such that $L(\mathcal{C}) \notin \,$$\mathsf{BPP/samp}$($D$) but $L(\mathcal{C}) \in \,$$\mathsf{P/poly}$. In particular, $L(\mathcal{C}):=\{x:c_{\ell(x)}^{\left(g(\ell(x))\right)}(x)=1\}$ where function $g(n)\in \{1,\cdots, |\mathcal{C}_n|\}$. However, before defining $g(n)$, we need to introduce some extra notation. Given an algorithm $\mathcal{A}$, a sequence of distributions $D$, and a polynomial $p(n)$, we define the set $\mathcal{E}_n(\mathcal{A},D,p)\subseteq \mathcal{C}_n$, as the set of concepts $c_n^{(j)}$  that satisfy
    \begin{equation}
        \mathbb{P}\left(\mathcal{A}(x,\mathcal{T}_{p(n)}^{(j)})=c_n^{(j)}(x)\right)< 2/3
    \end{equation}
for some $x\in\{0,1\}^n$. Therefore, if $\mathcal{C}$ is not worst-case learnable, then any triplet $(\mathcal{A},D,p)$ satisfies that $\mathcal{E}_n(\mathcal{A},D,p)\neq\emptyset$ for some values $n\in \mathbb{N}$. The set of values of $n\in \mathbb{N}$ for which this occurs is denoted by
\begin{equation}
    \mathcal{N}(\mathcal{A},D,p):=\{n\in \mathbb{N}: \mathcal{E}_n(\mathcal{A},D,p)\neq \emptyset\}
\end{equation}
Additionally, we use $B_{\mathrm{alg}}$ to represent a bijection between $\mathbb{N}$ and the set of efficient classical algorithms, and $B_{\mathrm{poly}}$ to denote a bijection between $\mathbb{N}$ and the set of polynomials $\{c\,n^k:(c,k)\in\mathbb{N}^2\}$.

The motivation behind the definition of function $g(n)$ is to ensure that for any pair $(\mathcal{A},p)$, there exists some $n$ such that $c_n^{(g(n))}$ is not learned in the worst-case by the algorithm $\mathcal{A}$ using a training set of size $p(n)$. Now, we provide the definition,
\begin{equation}
    g(n):=\left\{\begin{matrix}
1 \text{ if }n\notin\{n_i\}_{i=1}^{\infty}\\ 
\min \{j: c_n^{(j)}\in \mathcal{E}_n\left(B_{\mathrm{alg}}(s_{i^*(n)}),D,B_{\mathrm{poly}}(z_{i^*(n)})\right)\}\text{ if }n\in\{n_i\}_{i=1}^{\infty}
\end{matrix}\right.
\end{equation}
    where sequences $s_i\in \mathbb{N}$ and $z_i\in \mathbb{N}$ satisfy that any point $(a,b)\in \mathbb{N}^2$ appears at least once in the sequence $(s_i,z_i)$, $n_1:=\min\{\mathcal{N}\left(B_{\mathrm{alg}}(s_{1}),D,B_{\mathrm{poly}}(z_{1})\right) \}$, and $n_i=\min\{\mathcal{N}\left(B_{\mathrm{alg}}(s_{i}),D,B_{\mathrm{poly}}(z_{i})\right) \backslash\{n_1,n_2,\cdots,n_{i-1}\} \}$. Finally, $i^*(n)$ denotes the index $i$ such that $n_i=n$. Note that the set 
    \begin{equation}
        \mathcal{N}\left(B_{\mathrm{alg}}(s_{i}),D,B_{\mathrm{poly}}(z_{i})\right) \backslash\{n_1,n_2,\cdots,n_{i-1}\}
    \end{equation}
    cannot be the empty set from condition $(i)$, which implies that the set $\mathcal{N}(\mathcal{A},D,p)$ is formed by a countable infinite number of elements for any pair $(\mathcal{A},p)$.

    Next, we proceed to prove that $L(\mathcal{C}) \notin \,$$\mathsf{BPP/samp}$($D$), but $L(\mathcal{C}) \in \,\mathsf{P/poly}$. It is straightforward to verify that $L(\mathcal{C})\in\,\mathsf{P/poly}$. Given $g(n)$, which can be represented with a polynomial in $n$ number of bits (condition $(iii)$), we can efficiently compute $c_n^{(g(n))}(x)$ using a classical algorithm, as the concepts $c_n^{(j)}$ can be computed efficiently (condition $(ii)$). To establish that $L(\mathcal{C})\notin\,\mathsf{BPP/samp}$($D$), we employ a contradiction argument. In other words, we assume that $L(\mathcal{C})\in \,\mathsf{BPP/samp}$($D$), indicating the existence of a pair $(\mathcal{A},p)$ such that
    \begin{equation}
        \mathbb{P}\left(\mathcal{A}(x,\mathcal{T}_{p(n)})=\mathbbm{1}\{x\in L(\mathcal{C})\}\right)\geq 2/3
    \end{equation}
    for all $x\in\{0,1\}^n$, and $n\in \mathbb{N}$. Or equivalently, 
    \begin{equation}
        \mathbb{P}\left(\mathcal{A}(x,\mathcal{T}_{p(n)}^{(g(n))})=c_n^{(g(n))}(x)\right)\geq 2/3
    \end{equation}
    Since each point $(a,b)\in \mathbb{N}^2$ appears at least once in the sequence $(s_i,z_i)$, there exists a value $i$ such that $(s_i,z_i)=(B_{\mathrm{alg}}^{-1}(\mathcal{A}),B_{\mathrm{poly}}^{-1}(p))$. Therefore, for $n=n_i$, from the definition of $g(n)$, we have that 
    \begin{equation}
        \mathbb{P}\left(\mathcal{A}(x,\mathcal{T}_{p(n_i)}^{(g(n_i))})=c_{n_i}^{(g(n_i))}(x)\right)< 2/3
    \end{equation}
    for some $x\in \{0,1\}^{n_i}$. Therefore, there is a contradiction, and consequently, $L(\mathcal{C})\notin\,\mathsf{BPP/samp}$($D$).

\end{proof}

\begin{lemma}\label{lemma_2}
    The concept class $\mathcal{C}_n=\{c_n^{(j)}(x)=\mathbbm{1}\{x=j-1\}\}_{j=1}^{2^n}$ satisfies:

    \begin{enumerate}[$(i)$]
        \item It is not classically worst-case learnable. 
        
        \item All Boolean functions $c_n^{(j)}$ can be efficiently computed by a classical algorithm.
    
        \item $|\mathcal{C}_n|= 2^{n}$ for all $n\in\mathbb{N}$.
    \end{enumerate}
\end{lemma}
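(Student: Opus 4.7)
I would dispose of parts (iii) and (ii) first, by inspection. For (iii), the map $j\mapsto c_n^{(j)}$ is a bijection onto the $2^n$ singleton indicators on $\{0,1\}^n$, so $|\mathcal{C}_n|=2^n$. For (ii), given $(n,j)$ one computes $c_n^{(j)}(x)=\mathbbm{1}\{x=j-1\}$ by comparing $x$ bitwise with the $n$-bit binary representation of $j-1$, which runs in time $O(n)$.

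The substance is (i), which I would prove by contradiction via a coupling argument. Assume there exist a randomized algorithm $\mathcal{A}$, a sequence of distributions $D=\{D_n\}$, and a polynomial $p$ achieving worst-case learnability. Because $\sum_{j=1}^{2^n}\mathbb{P}_{D_n}(j-1)\leq 1$, the set of ``heavy'' indices $\{j:\mathbb{P}_{D_n}(j-1)\geq 1/(10\,p(n))\}$ has cardinality at most $10\,p(n)$. Hence for all sufficiently large $n$ one can select two distinct ``light'' indices $j_1\ne j_2$ with $\mathbb{P}_{D_n}(j_k-1)<1/(10\,p(n))$ for $k=1,2$.

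Now couple the two scenarios: draw $x_1,\dots,x_{p(n)}$ i.i.d.\ from $D_n$ and the internal randomness $r$ of $\mathcal{A}$ once, and label the $x_i$ either by $c_n^{(j_1)}$ or by $c_n^{(j_2)}$ to form training sets $\mathcal{T}^{(1)}$ and $\mathcal{T}^{(2)}$. Let $E$ be the event that neither $j_1-1$ nor $j_2-1$ appears among the $x_i$; by the union bound $\mathbb{P}(E)\geq 1-2\,p(n)/(10\,p(n))=4/5$. On $E$ both labellings are identically zero, so $\mathcal{T}^{(1)}=\mathcal{T}^{(2)}$, and with the shared randomness $r$ the outputs $\mathcal{A}(j_1-1,\mathcal{T}^{(1)})$ and $\mathcal{A}(j_1-1,\mathcal{T}^{(2)})$ coincide. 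Writing $\alpha_k:=\mathbb{P}(\mathcal{A}(j_1-1,\mathcal{T}^{(k)})=1)$, worst-case correctness forces $\alpha_1\geq 2/3$ (true label $1$ at $j_1-1$ under $c_n^{(j_1)}$) and $\alpha_2\leq 1/3$ (true label $0$ at $j_1-1$ under $c_n^{(j_2)}$). Yet on $E$ the two output variables agree, so $|\alpha_1-\alpha_2|\leq \mathbb{P}(E^c)\leq 1/5$, contradicting $\alpha_1-\alpha_2\geq 1/3$.

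The only delicate point is the calibration of the rarity threshold: $1/(10\,p(n))$ is chosen so that $\mathbb{P}(E^c)$ lies strictly below the gap $1/3$ implied by the $2/3$ confidence requirement, and any constant giving $\mathbb{P}(E^c)<1/3$ would do. The argument is otherwise purely combinatorial and invokes neither the efficiency of $\mathcal{A}$ nor any structural assumption on $D$, so I do not anticipate a real obstacle beyond this bookkeeping.
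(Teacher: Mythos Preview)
Your proof is correct. Parts (ii) and (iii) match the paper's treatment (both dismiss them by inspection). For part (i), however, your route differs genuinely from the paper's.

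The paper argues indirectly in two stages. First, it lower-bounds the Bayes error of \emph{identifying} the hidden concept under a uniform prior: unless a pair $(x,1)$ appears in the training set, the posterior is nearly flat, and a Jensen-type bound gives $p_e(\mathcal{A})\to 1$. Second, it shows that worst-case learnability would imply concept identification: by repeating the learner with $k$ independent training sets and taking a majority vote at every $x\in\{0,1\}^n$, one could recover $j$ from $O(n\,p(n))$ samples with high probability, contradicting the first step.

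Your argument is a direct two-point (Le Cam style) lower bound: pick two concepts whose distinguishing points are rare under $D_n$, couple the two experiments through shared samples and shared randomness, and observe that the resulting training sets coincide with probability at least $4/5$, which is already incompatible with the $2/3$ correctness requirement at the single query point $j_1-1$. This bypasses both the Bayes-error computation and the majority-vote amplification, and it never invokes the learner's efficiency or an exponential-time enumeration over $\{0,1\}^n$. The paper's detour through identification yields a slightly more quantitative statement (an explicit exponential sample-complexity lower bound for recovering $j$), but for the bare unlearnability claim your argument is shorter and more elementary.
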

\begin{proof}
See Appendix \ref{appendix_A}.
\end{proof}

From the proof of Lemma \ref{lemma_1} and using Theorem \ref{theorem_1}, another interesting property of the class $\mathsf{BPP/samp}$($D$) becomes apparent. There is no optimal sequence of distributions $D'$, i.e., a distribution that satisfies $\mathsf{BPP/samp}$($D$)$\,\subseteq\,\mathsf{BPP/samp}$($D'$) for all $D$. Moreover, this can be extended to the following proposition.

\begin{proposition}
    There does not exist a countably infinite set of sequences $\{D^{i}\}_{i\in \mathbb{N}}$ such that  $\mathsf{BPP/samp}$\textnormal{($D$)}$\subseteq \bigcup_i\mathsf{BPP/samp}$\textnormal{($D^{i}$)} holds for all $D$.
\end{proposition}

\begin{proof}
    Let us assume that there exists a countably infinite set of sequences $\{D^{i}\}_{i \in \mathbb{N}}$ such that $\mathsf{BPP/samp}(D) \subseteq \bigcup_i \mathsf{BPP/samp}$($D^{i}$) holds for all $D$. This implies that $\bigcup_i \mathsf{BPP/samp}(D^{i}) = \mathsf{BPP/samp(All)} = \mathsf{P/poly}$. On the other hand, note that we can modify the language $L(\mathcal{C})$, as defined in Lemma \ref{lemma_1}, by changing function $g(n)$ to
    \begin{equation}
        g(n):=\left\{\begin{matrix}
    1 \text{ if }n\notin\{n_i\}_{i=1}^{\infty}\\ 
    \min \{j: c_n^{(j)}\in \mathcal{E}_n\left(B_{\mathrm{alg}}(s_{i^*(n)}),B_{\mathrm{dist}}(v_{i^*(n)}),B_{\mathrm{poly}}(z_{i^*(n)})\right)\}\text{ if }n\in\{n_i\}_{i=1}^{\infty}
    \end{matrix}\right.
    \end{equation}
    where $B_{\mathrm{dist}}$ is defined as $B_{\mathrm{dist}}(k):=D^k$, the sequence $(s_i,v_i,z_i)$ satisfies that every point in $\mathbb{N}^3$ appears at least once, $n_1:=\min\{\mathcal{N}\left(B_{\mathrm{alg}}(s_{1}),B_{\mathrm{dist}}(v_1),B_{\mathrm{poly}}(z_{1})\right) \}$, and $n_i=\min\{\mathcal{N}\left(B_{\mathrm{alg}}(s_{i}),B_{\mathrm{dist}}(v_{i}),B_{\mathrm{poly}}(z_{i})\right) \backslash\{n_1,n_2,\cdots,n_{i-1}\} \}$. Using the same proof as in Lemma \ref{lemma_1} and using also Lemma \ref{lemma_2}, we can conclude that $L(\mathcal{C})$ does not belong to $\bigcup_i$$\mathsf{BPP/samp}$($D^{i}$), but it belongs to $\mathsf{P/poly}$. This yields a contradiction, and therefore, there does not exist a countably infinite set of sequences $\{D^{i}\}_{i\in \mathbb{N}}$ such that $\mathsf{BPP/samp}$($D$)$\subseteq \bigcup_i \mathsf{BPP/samp}$($D^{i}$) holds for all $D$.

\end{proof}

Note that since the set of efficient probabilistic Turing machines is countable, $\mathsf{BPP/samp}$ can be expressed as $\bigcup_{i\in \mathbb{N}} \mathsf{BPP/samp}$ (${D}^{i}$), where the set of sequences $\{{D}^{i}\}_{i\in \mathbb{N}}$ consists of those for which there exists an efficient sampling algorithm. Therefore, the next result follows:
\begin{corollary}
    $\mathsf{BPP/samp}\,\subsetneq\,\mathsf{P/poly}$.
\end{corollary}
\begin{proof}
    Assume that $\mathsf{BPP/samp}\,=\,\mathsf{P/poly}$. Then, there exists a countably infinite set of sequences $\{{D}^{i}\}_{i\in \mathbb{N}}$ such that 
    \begin{equation}
    \bigcup_{i\in \mathbb{N}} \mathsf{BPP/samp}({D}^{i})\,=\, \mathsf{P/poly}
    \end{equation}
    Therefore, since $\mathsf{P/poly}\,=\, \mathsf{BPP/samp(All)}$, it follows that there exists a countably infinite set of sequences $\{{D}^{i}\}_{i\in \mathbb{N}}$ such that  
    \begin{equation}
        \mathsf{BPP/samp}({D})\subseteq \bigcup_{i\in \mathbb{N}}\mathsf{BPP/samp}({D}^{i})
    \end{equation}
    holds for all ${D}$, which is a contradiction.
\end{proof}

Hence, this resolves the question raised in \cite{huang2021power} regarding the comparison between $\mathsf{BPP/samp}$ and $\mathsf{P/poly}$. Furthermore, this, together with the result $\mathsf{P/poly} \,=\, \mathsf{BPP/samp(All)}$, implies that there exists some sequence of distributions such that $\mathsf{BPP/samp}(D)\not\subseteq\,\mathsf{BPP/samp}$. Finally, to conclude this section, we consider a curiosity that stems from the main point of the proof of Lemma \ref{lemma_1}, i.e., the construction from a concept class of a language that demonstrates a separation between the classes $\mathsf{BPP/samp}$($D$) and $\mathsf{P/poly}$. It may be interesting to contemplate the opposite: Can we generate non-trivial concept classes that exhibit similar properties to the ones used in Lemma \ref{lemma_1} from languages $L \,\in \,\mathsf{P/poly}\backslash\mathsf{BPP/samp}$($D$)?. The following result partially answers this question. 
\begin{proposition}
    For each language $L\in \, \mathsf{P/poly}\,\backslash \mathsf{BPP/samp}$\textnormal{($D$)}, there exists a concept class $\mathcal{C}(L)$ such that 
    \begin{enumerate}[$(i)$]
        \item $\mathcal{C}(L)$ is not classically worst-case learnable using the sequence of distributions $D$.

        \item All Boolean functions $c_n^{(j)}$ can be efficiently computed by a classical algorithm.
    
        \item $2^{n}\leq|\mathcal{C}_n(L)|\leq  2^{p(n)}$ for all $n\in\mathbb{N}$, where $p(n)$ is a polynomial.
    \end{enumerate}

\begin{proof}
    Since $L\in \, \mathsf{P/poly}$, we have that there exists a sequence of strings $\{a_n\}_{n\in \mathbb{N}}$ such that $\ell(a_n)= q(n)$, where $q(n)$ denotes a polynomial, and a TM $M$ running in polynomial time such that $M(x,a_{\ell(x)})=\mathbbm{1}\{x\in L\}$. Therefore, we can express the language $L$ as the set $\{x : M(x, a_{\ell(x)}) = 1\}$. Next, we define our learning problem using the Turing machine  $M$ as $ c_n^{(a)}(x) = M(x, a)$, where $\mathcal{C}_n(L) := \{M(x, a) : \ell(a) = q(n) \text{ and } \ell(x) = n\}$. However, note that this definition only implies that $ |\mathcal{C}_n(L)| \leq 2^{q(n)}$, not that $ |\mathcal{C}_n(L)| \geq 2^n$. This is because it may happen that two different strings $i$ and $j$ satisfy $M(x, i) = M(x, j)$ for all $x \in \{0,1\}^n$. To ensure that sufficiently many distinct concepts are defined, we introduce an auxiliary Turing machine $M'$:
    \begin{equation}
        M'(x,a)= \left\{\begin{matrix}
        M(x,a) & \text{if}\; \ell(a)\leq n \\ M(x,a_{[n+1:\ell(a)]}) & \text{if}\; \ell(a)>n\text{ and } a_{[1:n]}= 0^n \\  
        M(x,a_{[n+1:\ell(a)]}) \oplus \mathbbm{1} \{x=a_{[1:n]}\} & \text{if}\;  \ell(a)>n\text{ and } a_{[1:n]} \neq 0^n
        \end{matrix}\right.
    \end{equation}
    where $a_{[u:v]}$ denotes the substring of $a$ from bits $u$ to $v$.

    Using this auxiliary TM $M'$, we define the concept class as $\mathcal{C}_n(L):=\{M'(x,a):\ell(a)=q(n)+n \text{ and } n=\ell(x)\}$. Clearly, this new concept class satisfies $(a)$ $|\mathcal{C}_n(L)|\geq  2^{n}$, $(b)$ the different concepts $c_n^{(j)}(x)$ can be computed efficiently since TM $M$ runs in polynomial time, and $(c)$ $L=\{x: c_n^{(g(n))}(x)=1\}$ for $g(n)=0^n\#a_n$, where $\#$ denotes the concatenation operation. Hence, $\mathcal{C}_n(L)$ satisfies conditions ($ii$) and ($iii$).

    Finally, to prove ($i$), we use a contradiction argument. That is, we assume that $\mathcal{C}(L)$ is classically worst-case learnable using sequence $D$. Therefore, there exists an algorithm $\mathcal{A}$ and polynomial $p(n)$ such that 

    \begin{equation}
        \mathbb{P}\left( \mathcal{A}(x,\mathcal{T}_{p(n)}^{(j)})=c_n^{(j)}(x)\right)\geq 2/3
    \end{equation}
    for all $x \in \{0,1\}^n$, $c_n^{(j)}\in \mathcal{C}_n(L)$, and $n\in \mathbb{N}$. However, since $L\notin\,\mathsf{BPP/samp}$($D$), for any efficient classical algorithm $\mathcal{A}$ and polynomial $p(n)$, there exists a non-empty set $\mathcal{N}\subseteq \mathbb{N}$ such that for all $n\in \mathcal{N}$
    \begin{equation}
        \mathbb{P}\left( \mathcal{A}(x,\mathcal{T}_{p(n)}^{(g(n))})=c_n^{(g(n))}(x)\right)< 2/3
    \end{equation}
    for some $x\in \{0,1\}^n$. Therefore, there is a contradiction. Consequently, $\mathcal{C}(L)$ is not classically worst-case learnable using sequence $D$.
    
 \end{proof}
\end{proposition}
\section{Quantum sampling advice and Quantum advice}

In this section, we explore a complete quantum setting, meaning that we not only consider quantum algorithms but also incorporate quantum advice and the quantum generalization of the training set. Interestingly, similar ideas to those presented in Theorem \ref{theorem_1} do not generalize to this setting. That is, we cannot discern whether $\mathsf{BQP/qsamp(All)}$ is a proper subset of $\mathsf{BQP/qpoly}$, or if both classes are equal. However, the results for a fixed distribution do generalize to the quantum setting. This is formally stated below.

\begin{theorem}\label{quantum_sampling}
    For any sequence of distributions $D$, $\mathsf{BQP/qsamp}$\textnormal{($D$)} is a proper subset of $\mathsf{BQP/qsamp(All)}$.
\end{theorem}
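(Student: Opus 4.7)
The plan is to mirror Theorem \ref{theorem_distribution_fixed} by proving quantum analogs of Lemma \ref{lemma_1} and Lemma \ref{lemma_2}. For the analog of Lemma \ref{lemma_1}, I would replace the enumeration $B_{\mathrm{alg}}$ of efficient classical algorithms by an enumeration of efficient quantum algorithms (still countable); the diagonal definition of $F(n)$, the language $L(\mathcal{C})$, and the contradiction argument transfer verbatim. This yields: if a concept class $\mathcal{C}$ satisfies (i$'$) it is not quantumly learnable in the worst case when the advice is the quantum training state built from $D$, (ii) each $c_n^{(j)}$ is efficiently computable classically, and (iii) $|\mathcal{C}_n|\leq 2^{q(n)}$, then $L(\mathcal{C})\in \textnormal{\textbf{P/poly}}=\textnormal{\textbf{BPP/samp}}\subseteq \textnormal{\textbf{BQP/qsamp}}$ by Theorem \ref{theorem_1}, yet $L(\mathcal{C})\notin \textnormal{\textbf{BQP/qsamp}}(D)$, which is exactly the desired strict inclusion.

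The substantive step is the quantum analog of Lemma \ref{lemma_2}: the point-function class $\mathcal{C}_n=\{c_n^{(j)}(x)=\mathbbm{1}\{x=j-1\}\}_{j=1}^{2^n}$ is not quantumly learnable in the worst case under any fixed $D$ when the learner receives the state
\begin{equation}
    |\psi_n^{(j)}\rangle = \Bigl(\sum_{x\in\{0,1\}^n}\sqrt{D_n(x)}\,|x\rangle\otimes|c_n^{(j)}(x)\rangle\Bigr)^{\otimes p(n)}.
\end{equation}
A direct computation gives, for $j\neq j'$,
\begin{equation}
    \langle \psi_n^{(j)}|\psi_n^{(j')}\rangle = \bigl(1-D_n(j-1)-D_n(j'-1)\bigr)^{p(n)}.
\end{equation}
Since $\sum_x D_n(x)=1$, at most $p(n)^2$ indices can satisfy $D_n(j-1)\geq 1/p(n)^2$, so for any two ``light'' indices $j,j'$ the overlap is at least $(1-2/p(n)^2)^{p(n)}=1-O(1/p(n))$ and the corresponding trace distance is $o(1)$.

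To close the impossibility, I fix any light $j^*$ and the adversarial input $x=j^*-1$: the correct label is $1$ for $c_n^{(j^*)}$ and $0$ for every other $c_n^{(j')}$. If a candidate learner $\mathcal{A}$ outputs $1$ on $(x,|\psi_n^{(j^*)}\rangle)$ with probability $\geq 2/3$, the operational meaning of trace distance forces it to output $1$ on $(x,|\psi_n^{(j')}\rangle)$ for any other light $j'$ with probability $\geq 2/3-o(1)>1/2$, violating the $2/3$ requirement for $c_n^{(j')}$. Because light indices are exponentially abundant in $\{1,\ldots,2^n\}$, this contradicts worst-case learnability and establishes (i$'$); conditions (ii) and (iii) are immediate. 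Feeding this $\mathcal{C}$ into the quantum Lemma \ref{lemma_1} closes the argument.

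The main obstacle is the indistinguishability step: one must ensure the $O(1/p(n))$ pairwise distance bound survives the $p(n)$-fold tensor structure of the training state and converts cleanly into a uniform worst-case failure over $\mathcal{C}_n$, as the trace-distance slack is the same order as the error budget $1/3$ of a BQP machine. Every other ingredient --- the diagonalization, the construction of $F(n)$, and the verification that $L(\mathcal{C})\in \textnormal{\textbf{BQP/qsamp}}$ via Theorem \ref{theorem_1} --- is a routine quantum transcription of the classical proof.
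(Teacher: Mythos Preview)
Your proposal is correct and follows essentially the same route as the paper: a quantum analogue of Lemma~\ref{lemma_1} (diagonalise over efficient quantum algorithms to get a language $L(\mathcal{C})\in\textnormal{\textbf{P/poly}}\setminus\textnormal{\textbf{BQP/qsamp}}(D)$), a quantum analogue of Lemma~\ref{lemma_2} for the point-function class, and the inclusion $\textnormal{\textbf{P/poly}}\subseteq\textnormal{\textbf{BQP/qsamp}}$ via Theorem~\ref{theorem_1}.

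The only substantive difference is in the hardness argument for the point-function class. The paper does not use a threshold $1/p(n)^2$; it simply picks the two concepts $c_n^{(i)},c_n^{(j)}$ whose indices $i-1,j-1$ are the two \emph{least probable} points under $D_n$, so that $D_n(i-1)+D_n(j-1)\le 2/2^n$. The single-copy overlap is then $\ge 1-2/2^n$, and after $N$ copies the Fuchs--van~de~Graaf bound gives trace distance $\le\sqrt{1-(1-2/2^n)^{2N}}$, which is exponentially small for polynomial $N$; the Helstrom bound then forces $N$ exponential. Your threshold approach yields the weaker trace distance $O(1/\sqrt{p(n)})$, which is still $o(1)$ and is enough for the two-concept contradiction you describe---so the worry you flag in your last paragraph is unfounded; your argument already works as written. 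The paper's choice of the two rarest points just makes the slack exponentially small rather than polynomially small, which cleanly dissolves exactly the concern you raised.
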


Similarly to Theorem \ref{theorem_distribution_fixed}, this theorem follows from two lemmas.

\begin{lemma}\label{lemma_3}
    If there exists a concept class $\mathcal{C}$ that satisfies 
    \noindent

    \begin{enumerate}[$(i)$]
        \item $\mathcal{C}$ is not worst-case learnable by a quantum algorithm given quantum sampling advice.

        \item All Boolean functions $c_n^{(j)}$ can be efficiently computed by a classical algorithm.
    
        \item $|\mathcal{C}_n|\leq 2^{q(n)}$ for all $n\in\mathbb{N}$, where $q(n)$ is a polynomial. 
    \end{enumerate}

    \smallskip\smallskip\noindent
    then $\mathsf{P/poly} \,\not\subseteq \, \mathsf{BQP/qsamp}$\textnormal{($D$)} for any sequence of distributions $D=\{D_n\}_{n\in \mathbb{N}}$.
\end{lemma}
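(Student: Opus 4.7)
The plan is to follow the diagonalization in the proof of Lemma \ref{lemma_1}, lifting every piece of it from the classical to the quantum setting. From the concept class $\mathcal{C}$ I would build a language $L(\mathcal{C}) = \{x : c_{\ell(x)}^{(F(\ell(x)))}(x) = 1\}$ for an appropriately diagonalized $F$, and then show $L(\mathcal{C}) \in \textnormal{\textbf{P/poly}}$ while $L(\mathcal{C}) \notin \textnormal{\textbf{BQP/qsamp}}(D)$.

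First I would fix a bijection $B_{\mathrm{qalg}}$ between $\mathbb{N}$ and the set of efficient quantum algorithms, described as uniform polynomial-size families of circuits over a fixed universal gate set; together with the polynomial enumeration $B_{\mathrm{poly}}$ already used in Lemma \ref{lemma_1}, this enumerates every candidate pair $(\mathcal{A},p)$. Next I would define the natural quantum analogue of $\mathcal{E}_n(\mathcal{A}, D, p)$ as the set of concepts $c_n^{(j)} \in \mathcal{C}_n$ for which $\mathbb{P}(\mathcal{A}(x, \ket{\psi_{n,j}}) = c_n^{(j)}(x)) < 2/3$ holds for some $x \in \{0,1\}^n$, where $\ket{\psi_{n,j}} = (\sum_x \sqrt{\mathbb{P}_{D_n}(x)}\ket{x}\ket{c_n^{(j)}(x)})^{\otimes p(n)}$ is the quantum sampling advice built from $D_n$ and $c_n^{(j)}$. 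Condition (i) asserts exactly that $\mathcal{N}_D(\mathcal{A},p) := \{n : \mathcal{E}_n(\mathcal{A}, D, p) \neq \emptyset\}$ is non-empty for every $(\mathcal{A}, p)$, and the same remark used in Lemma \ref{lemma_1} gives its infinitude.

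I would then define $F(n)$ exactly as in Lemma \ref{lemma_1}: set $n_i = \min(\mathcal{N}_D(B_{\mathrm{qalg}}(s_i), B_{\mathrm{poly}}(z_i)) \setminus \{n_1, \ldots, n_{i-1}\})$ along an enumeration $(s_i, z_i)$ covering $\mathbb{N}^2$, pick $F(n_i)$ to be the smallest $j$ for which $c_{n_i}^{(j)}$ lies in $\mathcal{E}_{n_i}(B_{\mathrm{qalg}}(s_i), D, B_{\mathrm{poly}}(z_i))$, and set $F(n) = 1$ elsewhere. Conditions (ii) and (iii) then give $L(\mathcal{C}) \in \textnormal{\textbf{P/poly}}$ immediately: use the $\le q(n)$-bit binary representation of $F(n)$ as the classical advice, and evaluate $c_n^{(F(n))}(x)$ using the efficient classical algorithm supplied by (ii).

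For the non-containment $L(\mathcal{C}) \notin \textnormal{\textbf{BQP/qsamp}}(D)$ I would argue by contradiction, and this is the step where the quantum setting requires a specific though easy observation: assuming that some $(\mathcal{A}, p)$ decides $L(\mathcal{C})$ in \textnormal{\textbf{BQP/qsamp}}$(D)$, the quantum advice on length-$n$ inputs is $(\sum_x \sqrt{\mathbb{P}_{D_n}(x)}\ket{x}\ket{\mathbbm{1}\{x \in L(\mathcal{C})\}})^{\otimes p(n)}$, which equals $\ket{\psi_{n, F(n)}}$ because $\mathbbm{1}\{x \in L(\mathcal{C})\} = c_n^{(F(n))}(x)$ on length-$n$ strings. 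Picking $i$ with $(s_i, z_i) = (B_{\mathrm{qalg}}^{-1}(\mathcal{A}), B_{\mathrm{poly}}^{-1}(p))$ and $n = n_i$, the construction of $F$ forces $c_n^{(F(n))} \in \mathcal{E}_n(\mathcal{A}, D, p)$, contradicting the assumed success probability. The main obstacle is precisely this identification of the \textnormal{\textbf{BQP/qsamp}}$(D)$-advice with the quantum sampling advice that enters condition (i); once it is in place, the rest of Lemma \ref{lemma_1} transports unchanged to the quantum model.
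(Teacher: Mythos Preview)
Your proposal is correct and follows precisely the approach the paper indicates: it transports the diagonalization of Lemma~\ref{lemma_1} to the quantum setting by replacing the enumeration of classical algorithms with one of efficient quantum algorithms and redefining $\mathcal{E}_n(\mathcal{A},D,p)$ in terms of the quantum sampling state, exactly as the paper's one-line proof sketch suggests. Your explicit observation that the \textnormal{\textbf{BQP/qsamp}}$(D)$ advice for $L(\mathcal{C})$ on length-$n$ inputs coincides with $\ket{\psi_{n,F(n)}}$ is the one point where the quantum argument requires a word beyond the classical case, and you handle it correctly.
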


The proof of this lemma is completely analogous to that of Lemma \ref{lemma_2}. That is, we rely on the same language $L(\mathcal{C})$, where the definition of function $g(n)$ is analogous. However, instead of considering classical algorithms, we employ quantum algorithms.

\begin{lemma}\label{lemma_4}
    The concept class $\mathcal{C}_n=\{c_n^{(j)}(x)=\mathbbm{1}\{x=j-1\}\}_{j=1}^{2^n}$ satisfies:

    \begin{enumerate}[$(i)$]
        \item $\mathcal{C}$ is not worst-case learnable by a quantum algorithm given quantum sampling advice.

        \item All Boolean functions $c_n^{(j)}$ can be efficiently computed by a classical algorithm.
    
        \item $|\mathcal{C}_n|= 2^{n}$ for all $n\in\mathbb{N}$.
    \end{enumerate}
\end{lemma}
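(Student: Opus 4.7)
Conditions (ii) and (iii) are immediate: $c_n^{(j)}(x) = \mathbbm{1}\{x = j-1\}$ is a single equality test, and there are $2^n$ indicator functions by construction. All the real work is in (i), the non-learnability under quantum sampling advice. The plan is a trace-distance argument: fix an arbitrary quantum learner $\mathcal{A}$, sequence of distributions $D = \{D_n\}$, and polynomial $p(n)$; I will exhibit, for all sufficiently large $n$, two concepts $c_n^{(j)}, c_n^{(k)}$ whose quantum advice states are nearly indistinguishable while worst-case correctness would force $\mathcal{A}$ to tell them apart on the input $x = j-1$.

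The first step is to compute overlaps. For the single-copy advice state
\[\ket{\phi_n^{(j)}} = \sqrt{D_n(j-1)}\,\ket{j-1}\ket{1} + \sum_{x \neq j-1}\sqrt{D_n(x)}\,\ket{x}\ket{0},\]
the ``1''-labelled and ``0''-labelled branches are orthogonal and $\ket{j-1} \perp \ket{k-1}$ for $j \neq k$, so only the $x \notin \{j-1, k-1\}$ terms survive, yielding $\braket{\phi_n^{(k)}}{\phi_n^{(j)}} = 1 - D_n(j-1) - D_n(k-1)$. Raising to the $p(n)$-th power for the full advice and using $T(\ket{\psi}\!\bra{\psi}, \ket{\phi}\!\bra{\phi}) = \sqrt{1-|\braket{\psi}{\phi}|^2}$ gives a trace distance of $\sqrt{1-(1-D_n(j-1)-D_n(k-1))^{2p(n)}}$ between the two advice states.

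Next I would apply a pigeonhole: because the $D_n$-masses sum to one, at most $40\,p(n)$ indices $j$ can have $D_n(j-1) > 1/(40\,p(n))$, so whenever $2^n > 40\,p(n)+1$ there are two distinct indices $j, k$ with $D_n(j-1) + D_n(k-1) \leq 1/(20\,p(n))$. Bernoulli's inequality then gives $(1 - 1/(20p(n)))^{2p(n)} \geq 9/10$, hence the trace distance of the advice states is at most $\sqrt{1/10} < 1/3$. On the other hand, worst-case learnability would force $\mathcal{A}$, on input $x = j-1$, to output $1$ with probability $\geq 2/3$ when given the advice for $c_n^{(j)}$ and to output $0$ with probability $\geq 2/3$ when given the advice for $c_n^{(k)}$ (since $c_n^{(k)}(j-1) = 0$). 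Viewing ``output $1$ on input $x = j-1$'' as a POVM element $E$, we would have $|\mathrm{Tr}(E\rho_j) - \mathrm{Tr}(E\rho_k)| \geq 1/3$, contradicting the trace-distance bound.

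The main obstacle I anticipate is getting the overlap right: the argument works only because the label register already encodes the concept, so the pairwise overlap collapses to a transparent function of the $D_n$-masses at the two special points. I also need to be careful that $\mathcal{A}$'s measurement may depend on the classical input $x$ but not on the unknown concept index, so that its ability to discriminate $c_n^{(j)}$ from $c_n^{(k)}$ at $x = j-1$ is genuinely capped by the trace distance of their advice states. Once these points are pinned down, the pigeonhole and Bernoulli estimates are routine.
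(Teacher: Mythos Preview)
Your proposal is correct and follows essentially the same route as the paper: compute the single-copy overlap $\braket{\phi_n^{(k)}}{\phi_n^{(j)}}=1 - D_n(j-1) - D_n(k-1)$, lift to $p(n)$ copies via the pure-state trace-distance/fidelity relation, locate a pair of concepts with high overlap, and contradict the $2/3$ success guarantee at the input $x=j-1$. The only cosmetic difference is that the paper selects the two least-probable points of $D_n$ (so that $D_n(j-1)+D_n(k-1)\le 2/2^n$, yielding an exponential lower bound on the number of copies), whereas you pigeonhole relative to $p(n)$ to get $D_n(j-1)+D_n(k-1)\le 1/(20\,p(n))$; both choices produce the same contradiction.
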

\begin{proof}
 See Appendix \ref{proof_lemma4}.  
\end{proof}

Therefore, by combining both lemmas, we conclude that $\mathsf{P/poly}\,\not\subseteq \, \mathsf{BQP/qsamp}$$(D)$. Consequently, since $\mathsf{P/poly}\,\subseteq \,\mathsf{BQP/mpoly} \,= \, \mathsf{BQP/samp(All)}\,\subseteq \, \mathsf{BQP/qsamp(All)}$, the class $\mathsf{BQP/qsamp}$$(D)$ must be a proper subset of $\mathsf{BQP/qsamp(All)}$. Otherwise, if $\mathsf{BQP/qsamp}$$(D)$ were equal to $\mathsf{BQP/qsamp(All)}$, we would arrive at a contradiction. As in the previous section, this result can be generalized to a countably infinite set of sequences $\{D^i\}_{i \in \mathbb{N}}$, implying that $\mathsf{BQP/qsamp}$ is a proper subset of $\mathsf{BQP/qsamp(All)}$.

As a summary of the results from this section and Section \ref{section_3}, the figure below illustrates the various relations between the studied complexity classes.

 \begin{figure}[H]
	\centering
	\includegraphics[width=16cm]{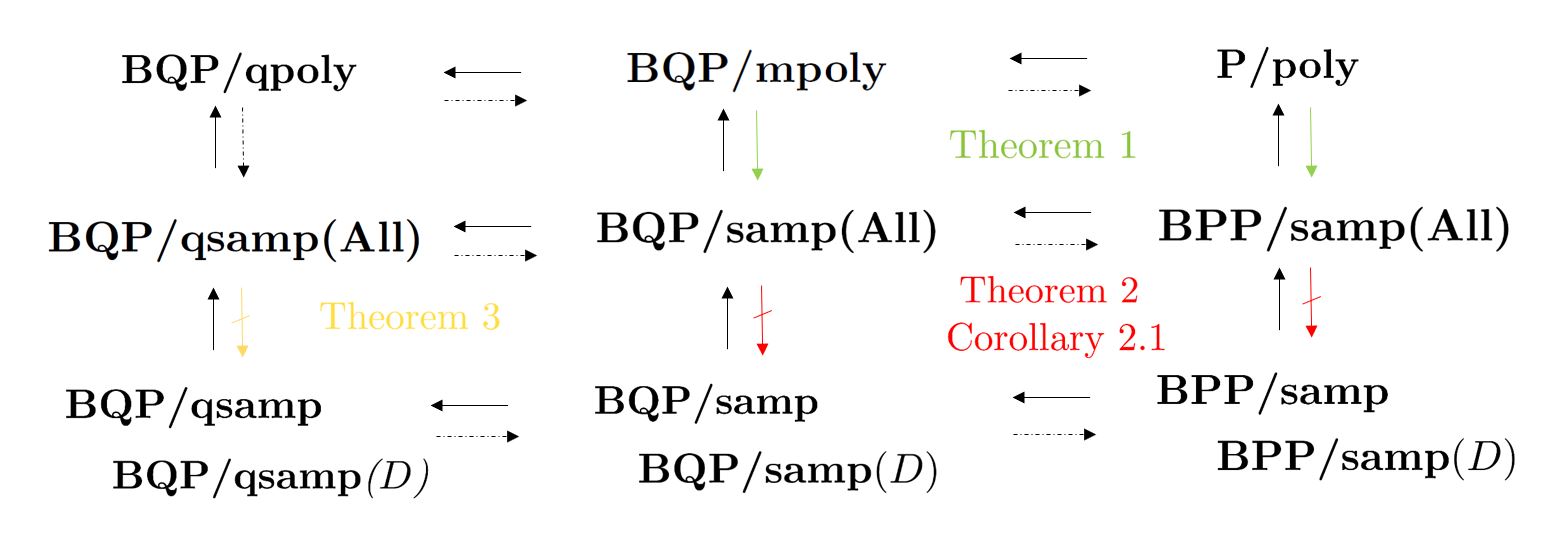}
	\caption{Diagram of the relations between the different complexity classes studied. A solid arrow from $A$ to $B$ implies $A \subseteq B$. If the arrow is dashed, then it is unknown if that inclusion holds. If a bar is included, the inclusion does not hold. The last row, for simplicity, contains two classes. The horizontal arrows relate classes of the same type—i.e., either $\mathsf{/samp}$ or $\mathsf{/samp}(D)$—while the vertical arrows apply to both.}
	\label{fig:generate_samples}
\end{figure}

\section{Complexity assumptions for learning speed-ups}

Up to this point, our focus has primarily centered on the vertical relations illustrated in Figure 1. However, the study of the other relations is also highly interesting. In this section, we show how some of these relations are intricately tied to the existence of learning separations. We examine three main scenarios: two concerning the worst-case setting and one involving the average-case setting. Specifically, in the two worst-case scenarios, we examine both the case where the separation is tailored to a specific distribution and the case where, regardless of the distribution used, the concept class is not classically learnable. For the latter scenario, the following result holds.

\begin{theorem}\label{theorem_worst_case}
    The condition $\mathsf{BQP}\, \not\subset \, \mathsf{P/poly}$ is sufficient for the existence of a concept class $\mathcal{C}$ that is worst-case learnable by a quantum procedure but is not classically learnable. Additionally, the condition $\mathsf{PromiseBQP} \, \not\subset \, \mathsf{PromiseP/poly}$ is a necessary condition for the existence of a concept class exhibiting the aforementioned property.
\end{theorem}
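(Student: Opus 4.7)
The plan has two parts, treating sufficiency and necessity independently.

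For sufficiency, I would fix a language $L \in \textnormal{\textbf{BQP}} \setminus \textnormal{\textbf{P/poly}}$ (guaranteed by the hypothesis) and define the trivial one-concept-per-length class $\mathcal{C}_n = \{c_n\}$ with $c_n(x) = \mathbbm{1}\{x \in L\}$. Quantum learnability is immediate: the quantum learner discards its training set and runs the $\textnormal{\textbf{BQP}}$ decider for $L$ on $x$, succeeding with probability at least $2/3$. For classical non-learnability, observe that a classical worst-case learner for this singleton family is, by definition, a $\textnormal{\textbf{BPP/samp}}$-algorithm for $L$; by Theorem~\ref{theorem_1} this would force $L \in \textnormal{\textbf{P/poly}}$, contradicting the choice of $L$.

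For necessity, I would argue the contrapositive: assuming $\textnormal{\textbf{PromiseBQP}} \subseteq \textnormal{\textbf{PromiseP/poly}}$, every worst-case quantum-learnable concept class is also classically learnable. Given a quantum learner $(\mathcal{A}_Q, D_Q, p_Q)$ for $\mathcal{C}$, I first amplify by chunking: $\mathcal{A}_Q'$ uses training size $k\, p_Q(n)$, splits it into $k$ independent chunks, runs $\mathcal{A}_Q$ on each, and returns the majority, boosting per-$(x, c_n^{(j)})$ success probability to $1 - \delta$ with $\delta = e^{-\Omega(k)}$ by a Chernoff bound. Introduce the promise problem $\Pi$ on inputs $(x, \mathcal{T})$ via
\[
\Pi_{\text{yes}} = \{(x,\mathcal{T}) : \mathbb{P}_r(\mathcal{A}_Q'(x,\mathcal{T};r) = 1) \geq 2/3\}, \quad
\Pi_{\text{no}} = \{(x,\mathcal{T}) : \mathbb{P}_r(\mathcal{A}_Q'(x,\mathcal{T};r) = 1) \leq 1/3\}.
\]
Then $\Pi \in \textnormal{\textbf{PromiseBQP}}$ (decided by $\mathcal{A}_Q'$ itself), so by hypothesis $\Pi \in \textnormal{\textbf{PromiseP/poly}}$ via some classical polynomial-time algorithm $B$ with advice sequence $\{a_m\}_{m \in \mathbb{N}}$.

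To convert $B$ into a classical learner for $\mathcal{C}$, I exploit the observation that the distribution sequence $\{D_n\}$ in the learnability definition is not required to be uniformly computable and can itself carry the non-uniform $\textnormal{\textbf{P/poly}}$-advice. Pick a polynomial-size set $S_n \subset \{0,1\}^n$ of negligible $D_{Q,n}$-mass (which exists by averaging over random poly-size subsets) and set $D_{C,n} = \tfrac{1}{2} D_{Q,n} + \tfrac{1}{2} \mu_n$, where $\mu_n$ is supported on $S_n$ and its weights encode $a_{m(n)}$ through the probability-ratio scheme of Theorem~\ref{theorem_1}, with $m(n)$ the input length of $(x, \mathcal{T})$ fed to $B$. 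The classical learner, given $\mathcal{T}_C$ of size $p_C(n)$ for a sufficiently large polynomial $p_C$, uses samples landing in $S_n$ to decode $a_{m(n)}$ from empirical frequencies (exactly as in the proof of Theorem~\ref{theorem_1}) while treating the samples outside $S_n$, which are approximately i.i.d. from $D_{Q,n}$ with labels $c_n^{(j)}$, as a proxy training set $\mathcal{T}_Q$ of size $\geq k\, p_Q(n)$; it then outputs $B(x, \mathcal{T}_Q, a_{m(n)})$. Correctness follows from amplification plus Markov: the amplified guarantee $\mathbb{E}_{\mathcal{T}_Q}[\mathbb{P}_r(\mathcal{A}_Q'(x,\mathcal{T}_Q;r) = c_n^{(j)}(x))] \geq 1 - \delta$ implies, by Markov's inequality, that $\mathcal{T}_Q$ meets the correct side of the promise with probability $\geq 1 - 3\delta$, and on such $\mathcal{T}_Q$ the algorithm $B$ outputs $c_n^{(j)}(x)$, giving total success probability $\geq 2/3$ once $\delta$ is a small enough constant.

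The main obstacle I expect is engineering the clean separation of advice-encoding samples from learning samples inside the single distribution $D_{C,n}$. Picking $S_n$ to carry negligible $D_{Q,n}$-mass, possible for arbitrary $D_{Q,n}$ by an averaging argument over random poly-size subsets of $\{0,1\}^n$, makes the membership test $x \in S_n$ a reliable classifier up to an error vanishing with $n$, which can be absorbed into the Markov bound. Beyond this the proof is bookkeeping: choosing $k$ to balance the amplification error against the $2/3$ target, sizing $p_C(n)$ so that both subsets contain enough samples with high probability, and matching the polynomial input length $m(n)$ to the advice length $|a_{m(n)}|$.
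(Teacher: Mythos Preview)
Your approach mirrors the paper's closely: the sufficiency argument via a singleton concept class is identical, and for necessity you build the same promise problem on pairs $(x,\mathcal{T})$, invoke $\textnormal{\textbf{PromiseP/poly}}$ to obtain a classical simulator $B$ with advice $\{a_m\}$, and then smuggle that advice into the training distribution so a classical learner can recover it and feed $B$. Your explicit amplification step followed by Markov's inequality to certify that most training sets land on the correct side of the promise is a clean bookkeeping device.

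There is, however, a genuine gap in the encoding step. The mixture $D_{C,n} = \tfrac{1}{2} D_{Q,n} + \tfrac{1}{2}\mu_n$ does \emph{not} allow the learner to identify $S_n$ from samples: an element $x\in S_n$ has probability $\Theta(1/|S_n|)$, while an element $x\notin S_n$ has probability $\tfrac{1}{2}D_{Q,n}(x)$, which can be anything up to $\tfrac{1}{2}$. High-mass points of $D_{Q,n}$ are therefore empirically indistinguishable from advice-carrying points, and the learner---a single uniform probabilistic TM with no side description of $S_n$---cannot separate the two streams. Choosing $S_n$ with negligible $D_{Q,n}$-mass addresses only the orthogonal issue of making the conditional law on the complement close to $D_{Q,n}$; it does nothing for identification, which is precisely the obstacle you flagged. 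The paper's remedy is to \emph{scale down} $D_{Q,n}$ rather than mix equally: setting (up to normalization) $D_{\mathrm{adv},n}(x)\propto \tfrac{1}{3h(n)}D_{Q,n}(x)+\mathbbm{1}\{x\in S\}\tfrac{2+b(x)}{3h(n)}$ forces every $x\notin S$ to have probability at most $\tfrac{1}{3Ch(n)}$ and every $x\in S$ at least $\tfrac{2}{3Ch(n)}$, so a single frequency threshold recovers $S$. The cost is that only a $\Theta(1/h(n))$ fraction of samples are learning samples, but that is still polynomially many. Replacing your weight $\tfrac{1}{2}$ on $D_{Q,n}$ by $\Theta(1/h(n))$ would close the gap and make your argument go through.
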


\begin{proof}
    See Appendix \ref{proof_theorem3}.
\end{proof}
Before proceeding with the discussion of this result, let us first clarify the statement of the theorem. The classes $\mathsf{PromiseBQP}$ and $\mathsf{PromiseP/poly}$ denote the analogous classes to $\mathsf{BQP}$ and $\mathsf{P/poly}$ for \textit{promise problems}. A promise problem is a generalization of a decision problem where the decision is only defined on a subset $S$ of $\{0,1\}^*$. Consequently, an algorithm is considered to solve a promise problem if it produces correct answers for the strings in $S$, while it has the freedom to output any result for the remaining strings.

Now that the statement is clear, we observe that the necessary and sufficient conditions are quite similar, with only a subtle difference between them. Regarding the necessary condition $\mathsf{PromiseBQP} \, \not\subset \, \mathsf{PromiseP/poly}$, it implies that quantum computations cannot be efficiently simulated classically, even when augmented with polynomial-size classical advice. Thus, this condition is stronger than the more standard condition $\mathsf{PromiseBQP}\, \not\subset \, \mathsf{PromiseBPP}$. Next, in the less restricted scenario where the speed-up is considered only for a particular distribution, the required conditions become weaker, as shown below.

\begin{theorem}\label{fixed_distribution_thm}
    The condition $\mathsf{BPP/samp}(D)\, \subsetneq \, \mathsf{BQP/samp}$\textnormal{($D$)} is a necessary and sufficient condition for the existence of a concept class $\mathcal{C}$ that is worst-case learnable by a quantum procedure using sequence $D$ but it is not worst-case classically learnable using sequence $D$. 
\end{theorem}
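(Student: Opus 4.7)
The plan is to handle the two directions in turn.

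For the sufficient direction, I would take any $L\in\textnormal{\textbf{BQP}}\setminus\textnormal{\textbf{BPP/samp}}(D)$ and form the singleton concept class $\mathcal{C}_n=\{c_n\}$ with $c_n(x)=\mathbbm{1}\{x\in L\}$. A quantum learner simply discards its training set and invokes the assumed \textnormal{\textbf{BQP}} decider for $L$, succeeding with probability at least $2/3$ on every input, so $\mathcal{C}$ is quantum-learnable with $D$. If $\mathcal{C}$ were also classically learnable with $D$, that learner would be a \textnormal{\textbf{BPP/samp}}($D$) decider for $L$, contradicting our choice, so no such classical learner can exist.

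For the necessary direction I would argue by contrapositive, assuming $\textnormal{\textbf{BQP}}\subseteq\textnormal{\textbf{BPP/samp}}(D)$ and showing that any $\mathcal{C}$ quantum-learnable with $D$ is also classically learnable with $D$. Given the quantum learner $\mathcal{Q}$ and polynomial $p$ witnessing quantum learnability, the key object is the language $L_{\mathcal{Q}}=\{(x,T):\mathcal{Q}(x,T)=1\}$, where $(x,T)$ encodes an instance together with a training set of size $p(\ell(x))$. After standard amplification $L_{\mathcal{Q}}$ sits in \textnormal{\textbf{BQP}}, so by hypothesis it lies in \textnormal{\textbf{BPP/samp}}($D$) and is decided by some classical randomized procedure $\mathcal{A}$ with polynomial-size sample advice from $D$. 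The candidate classical learner then runs $\mathcal{A}$ on the very pair $(x,T)$ supplied to the learner, exploiting the fact that $\mathcal{Q}(x,T)=c(x)$ whenever $T$ is a valid training set labeled by the target concept $c$, so that $\mathcal{A}$'s output coincides with $c(x)$.

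The step I expect to require the most care is reconciling the length at which $\mathcal{A}$ consumes its samp$(D)$ advice (namely $D$ at length $\ell(x)+\ell(T)$) with the length at which the classical learner has access to its own training set (namely $D_{\ell(x)}$). Because $D$ is a fixed but otherwise arbitrary sequence of distributions, a naïve padding will not in general preserve the relevant marginals, so the plan is to design a joint encoding of $(x,T)$ together with an extension of $\mathcal{C}$ to the padded length under which quantum learnability is preserved and under which the classical learner extracted from $\mathcal{A}$ can be pulled back to decide the original concept class. Once this alignment is in place, the resulting classical learner contradicts the hypothesis that $\mathcal{C}$ is not classically learnable with $D$, which forces $\textnormal{\textbf{BQP}}\not\subset\textnormal{\textbf{BPP/samp}}(D)$ and closes the argument.
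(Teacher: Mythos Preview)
Your sufficient direction is exactly the paper's argument.

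For the necessary direction, your route diverges from the paper and carries a genuine gap. The paper does not try to simulate the quantum learner classically. Instead it reuses the diagonalisation from Lemma~\ref{lemma_1}: from a concept class $\mathcal{C}$ that is quantumly but not classically learnable under $D$, it builds the language $L(\mathcal{C})=\{x: c_{\ell(x)}^{(F(\ell(x)))}(x)=1\}$, shows $L(\mathcal{C})\notin\textnormal{\textbf{BPP/samp}}(D)$ via the diagonal choice of $F$, and observes $L(\mathcal{C})\in\textnormal{\textbf{BQP/samp}}(D)$ because the quantum learner handles every concept in $\mathcal{C}$. This gives $\textnormal{\textbf{BPP/samp}}(D)\subsetneq\textnormal{\textbf{BQP/samp}}(D)$, from which the paper infers $\textnormal{\textbf{BQP}}\not\subset\textnormal{\textbf{BPP/samp}}(D)$. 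The point is that $L(\mathcal{C})$ lives at the original input length $n$, so no length realignment is ever needed.

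Your plan, by contrast, feeds the pair $(x,T)$ to a $\textnormal{\textbf{BPP/samp}}(D)$ decider $\mathcal{A}$ for $L_{\mathcal{Q}}$. The obstacle you flag is not merely a matter of care; as stated it is fatal. The training set $\mathcal{A}$ requires is drawn from $D_m$ with $m=\ell(\langle x,T\rangle)$ and is labelled by $\mathbbm{1}\{\cdot\in L_{\mathcal{Q}}\}$, whereas the only samples your classical learner possesses come from $D_n$ and are labelled by $c$. Because $D$ is an arbitrary fixed sequence there is no relation whatsoever between $D_m$ and $D_n$, and---more fundamentally---producing the labels $\mathbbm{1}\{\cdot\in L_{\mathcal{Q}}\}$ already presupposes the ability to decide $L_{\mathcal{Q}}$. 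No padding or ``extension of $\mathcal{C}$'' manufactures those labels. A secondary issue: $L_{\mathcal{Q}}$ as you define it is a promise problem rather than a language (on inputs where $\mathcal{Q}$ accepts with probability near $1/2$ amplification does nothing), so the hypothesis $\textnormal{\textbf{BQP}}\subseteq\textnormal{\textbf{BPP/samp}}(D)$, which concerns languages, does not directly apply.
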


\begin{proof}
    Firstly, we establish the sufficiency of the statement. Therefore, we assume $\mathsf{BPP/samp}(D)\, \subsetneq \, \mathsf{BQP/samp}$($D$), implying the existence of a language $L$ such that $L\in \mathsf{BQP/samp}$($D$) and $L\notin \mathsf{BPP/samp}$($D$). Consequently, a concept class formed solely by the concept $c_n(x)=\mathbbm{1}\{x \in L\}$ is quantum learnable using sequence $D$ by definition, and it is not classically learnable using sequence $D$.
    
    Regarding the necessary part, we assume the existence of a concept class $\mathcal{C}$ that satisfies the conditions stated in the theorem. Next, using the construction of the language $L(\mathcal{C})$ employed in Lemma \ref{lemma_1}, we establish that $L(\mathcal{C}) \notin \,\mathsf{BPP/samp}$($D$). Since the concept class is (worst-case) learnable by a quantum procedure, we conclude that $L(\mathcal{C}) \in \,\mathsf{BQP/samp}$($D$). Consequently, $\mathsf{BPP/samp}$($D$) is a proper subset of $\mathsf{BQP/samp}$($D$).
\end{proof}

Interestingly, we can also show that $\mathsf{BQP}\,\not\subset\,\mathsf{BPP/samp}$($D$) is a sufficient condition, while $\mathsf{PromiseBQP}\,\not\subset\,\mathsf{PromiseBPP/samp}$($D$) is a necessary one. Note that this is analogous to the result of Theorem \ref{theorem_worst_case}. Consequently, by analogy, Theorem \ref{fixed_distribution_thm} suggests that the true necessary and sufficient condition for the scenario outlined in Theorem \ref{theorem_worst_case} might be $\mathsf{P/poly}\,\subsetneq \,\mathsf{BQP/mpoly}$. Finally, we examine quantum speed-ups for the average-case scenario.

\begin{theorem}\label{theorem_average_case}
    The condition $\mathsf{HeurBPP/samp}(D)\,\subsetneq\,\mathsf{HeurBQP/samp}$$(D)$ is a necessary and sufficient condition for the existence of a concept class $\mathcal{C}$ that is average-case learnable by a quantum procedure for sequence $D$ but it is not classically average-case learnable for sequence $D$.
\end{theorem}
\begin{proof}
    See Appendix \ref{proof_average}.
\end{proof}

The classes $\mathsf{HeurBPP/samp}$$(D)$ and $\mathsf{HeurBQP/samp}$$(D)$ are heuristic complexity classes, that is, sets of distributional problems $(L,D)$, where $L$ is a language and $D$ is a sequence of distributions. The specific definitions are given in Appendix \ref{proof_average}. The intuitive idea is that the algorithms are not required to work for all inputs $x$, instead, they are required to work for most inputs, where `most' is measured by the distributions $D_n$.

In summary, these theorems imply that proving learning speed-ups requires assumptions more involved than the conventional conjecture $\mathsf{BQP}\neq \mathsf{BPP}$. However, it is also worth noting that, using the discrete logarithm example provided in \cite{liu2021rigorous}, the classical hardness of the discrete logarithm implies $\mathsf{BQP}\, \not\subset \, \mathsf{BPP/samp}$($U$), where $U$ denotes the sequence of uniform distributions. Therefore, in general, these conjectures do not seem unlikely. 

\section{Conclusions}

In conclusion, we have proven that advice in the form of a training set is weaker than classical advice, i.e., $\mathsf{BPP/samp}$ is a proper subset of $\mathsf{P/poly}$. Similarly, when the distribution is fixed, the sampling advice also becomes strictly weaker than  general classical advice. These results also hold for the scenario where \textit{quantum training sets} are considered. However, unlike in the classical scenario, the relationship between quantum advice and the class $\mathsf{BQP/qsamp(All)}$ remains uncertain. Finally, we have also presented complexity assumptions that are both sufficient and necessary for the existence of a learning speed-up. In particular, we study both the worst-case and average-case scenarios. The resulting conditions, in both cases, are more involved than the usual conjecture $\mathsf{BQP} \neq \mathsf{BPP}$.

\section{Acknowledgements}
I thank Javier R. Fonollosa and Alba Pagès-Zamora for their guidance and support. This work has been funded by grants PID2022-137099NB-C41, PID2019-104958RB-C41 funded by MCIN/AEI/10.13039/501100011033 and FSE+ and by grant 2021 SGR 01033 funded by AGAUR, Dept. de Recerca i Universitats de la Generalitat de Catalunya 10.13039/501100002809.

\bibliographystyle{ieeetr}
\bibliography{bibliography.bib}

\begin{appendices}

\section{Equivalence of Learning Definitions}\label{appendix_equivalence_of_learning_def}

In this appendix, we show that Definition \ref{average_case} can be re-expressed as follows:

\begin{definition}\label{aux_definition}
    A concept class $\mathcal{C}$ is classically (quantum) average-case learnable for a sequence of distributions $D$ if there exists an efficient classical (quantum) algorithm $\mathcal{A}$, and a multivariate polynomial $p(a,b)$, such that the algorithm $\mathcal{A}$, given input $x\in \{0,1\}^n$, string $1^{m}$, and training set $\mathcal{T}_{p(n,m)}^{(j)}=\{(x_i,c_n^{(j)}(x_i))\}_{i=1}^{p(n,m)}$ with $x_i\sim D_n$, outputs $\mathcal{A}(x,\mathcal{T}_{p(n,m)}^{(j)},1^{m})\in \{0,1\}$ satisfying
    \begin{equation}
        \mathbb{P}_{x\sim D_n}\left( \mathbb{P}_{\mathcal{T}_{p(n,m)}^{(j)},\textnormal{ }\mathcal{A}}\left(\mathcal{A}(x,\mathcal{T}_{p(n,m)}^{(j)},1^{m})=c_n^{(j)}(x)\right)\geq \frac{2}{3}\right)\geq 1-\frac{1}{m}
    \end{equation}
    for all $n,m\in \mathbb{N}$, and $c_n^{(j)} \in \mathcal{C}_n $. The external probability is taken with respect to $x\sim D_n$, and the internal probability with respect to the training set $\mathcal{T}_{p(n,m)}^{(j)}$, and the randomness of algorithm $\mathcal{A}$.
\end{definition}

\begin{proof}

First, let's prove that if a concept class $\mathcal{C}$ is average-case learnable for sequence of distributions $D$ following Definition \ref{average_case}, then it is also average-case learnable according to Definition \ref{aux_definition}.
    \begin{align}\label{inequality_for_definition}
        1-\frac{1}{m} &\leq \mathbb{E}_{x\sim D_n,\mathcal{T}_{p(n,m)}^{(j)},\textnormal{ }\mathcal{A}}\left[ \mathbbm{1} \left\{\mathcal{A}(x,\mathcal{T}_{p(n,m)}^{(j)},1^{m})=c_n^{(j)}(x)\right\} \right] \nonumber \\ &= \mathbb{E}_{x\sim D_n}\left[ \mathbb{P}_{\mathcal{T}_{p(n,m)}^{(j)},\textnormal{ }\mathcal{A}} \left\{\mathcal{A}(x,\mathcal{T}_{p(n,m)}^{(j)},1^{m})=c_n^{(j)}(x)\right\} \right] \nonumber \\ & =\mathbb{E}_{x\sim D_n}\left[ \mathbb{P}_{\mathcal{T}_{p(n,m)}^{(j)},\textnormal{ }\mathcal{A}} \left\{\mathcal{A}(x,\mathcal{T}_{p(n,m)}^{(j)},1^{m})=c_n^{(j)}(x)\right\} |x\in \mathcal{X}_{2/3}(n,m)\right] \mathbb{P}_{x\sim D_n} (x \in \mathcal{X}_{2/3}(n,m)) \nonumber \\ & + \mathbb{E}_{x\sim D_n}\left[ \mathbb{P}_{\mathcal{T}_{p(n,m)}^{(j)},\textnormal{ }\mathcal{A}} \left\{\mathcal{A}(x,\mathcal{T}_{p(n,m)}^{(j)},1^{m})=c_n^{(j)}(x)\right\} |x\notin \mathcal{X}_{2/3}(n,m)\right] \mathbb{P}_{x\sim D_n} (x \notin \mathcal{X}_{2/3}(n,m)) \nonumber \\&\leq \mathbb{P}_{x\sim D_n} (x \in \mathcal{X}_{2/3}(n,m))+ \frac{2}{3}\, \mathbb{P}_{x\sim D_n} (x \notin \mathcal{X}_{2/3}(n,m)) \nonumber \\ & = \frac{2}{3}+\frac{1}{3}  \mathbb{P}_{x\sim D_n} (x \in \mathcal{X}_{2/3}(n,m))
    \end{align}
    where $\mathcal{X}_{2/3}(n,m)$ denotes the set of Boolean strings $x\in \{0,1\}^n$ such that
    \begin{equation}
        \mathbb{P}_{\mathcal{T}_{p(n,m)}^{(j)},\textnormal{ }\mathcal{A}}\left(\mathcal{A}(x,\mathcal{T}_{p(n,m)}^{(j)},1^{m})=c_n^{(j)}(x)\right)\geq \frac{2}{3}
    \end{equation}
    Inequality \eqref{inequality_for_definition} implies that $\mathbb{P}_{x\sim D_n} (x \in \mathcal{X}_{2/3}(n,m))\geq 1-\frac{3}{m}$, which proves the desired implication. Now, let's move on to the other direction. For this, we use the fact that the condition
    \begin{equation}
        \mathbb{P}_{x\sim D_n}\left( \mathbb{P}_{\mathcal{T}_{p(n,m)}^{(j)},\textnormal{ }\mathcal{A}}\left(\mathcal{A}(x,\mathcal{T}_{p(n,m)}^{(j)},1^{m})=c_n^{(j)}(x)\right)\geq \frac{2}{3}\right)\geq 1-\frac{1}{m}
    \end{equation}
    is equivalent to 
    \begin{equation}
        \mathbb{P}_{x\sim D_n}\left( \mathbb{P}_{\mathcal{T}_{p(n,m)}^{(j)},\textnormal{ }\mathcal{A}}\left(\mathcal{A}(x,\mathcal{T}_{p(n,m)}^{(j)},1^{m})=c_n^{(j)}(x)\right)\geq 1-\frac{1}{m}\right)\geq 1-\frac{1}{m}
    \end{equation}
    This equivalence follows by using a majority vote. Therefore,
    \begin{align}\label{second_inequality_equivalence}
        \mathbb{E}_{x\sim D_n,\mathcal{T}_{p(n,m)}^{(j)},\textnormal{ }\mathcal{A}}&\left[ \mathbbm{1} \left\{\mathcal{A}(x,\mathcal{T}_{p(n,m)}^{(j)},1^{m})=c_n^{(j)}(x)\right\} \right] \nonumber \\ &= \mathbb{E}_{x\sim D_n}\left[ \mathbb{P}_{\mathcal{T}_{p(n,m)}^{(j)},\textnormal{ }\mathcal{A}} \left\{\mathcal{A}(x,\mathcal{T}_{p(n,m)}^{(j)},1^{m})=c_n^{(j)}(x)\right\} \right] \nonumber \\ & \geq \mathbb{E}_{x\sim D_n}\left[ \mathbb{P}_{\mathcal{T}_{p(n,m)}^{(j)},\textnormal{ }\mathcal{A}} \left\{\mathcal{A}(x,\mathcal{T}_{p(n,m)}^{(j)},1^{m})=c_n^{(j)}(x)\right\} |x \in \mathcal{X}_{{1/m}}(n,m)\right] \nonumber \\ &\hspace{1cm} \times \mathbb{P}_{x\sim D_n}( x \in \mathcal{X}_{{1/m}}(n,m)) \nonumber \\ &\geq \left(1-\frac{1}{m}\right)^2 \geq 1-\frac{2}{m}.
    \end{align}    
    where $\mathcal{X}_{1/m}(n,m)$ denotes the set of Boolean strings $x\in \{0,1\}^n$ that satisfy    
    \begin{equation}
        \mathbb{P}_{\mathcal{T}_{p(n,m)}^{(j)},\textnormal{ }\mathcal{A}}\left(\mathcal{A}(x,\mathcal{T}_{p(n,m)}^{(j)},1^{m})=c_n^{(j)}(x)\right)\geq 1-\frac{1}{m}
    \end{equation}
    Inequality \eqref{second_inequality_equivalence} proves that if a concept class $\mathcal{C}$ is average-case learnable for sequence of distributions $D$ following Definition \ref{aux_definition}, then it is also average-case learnable according to Definition \ref{average_case}. Therefore, by combining this result with the previous one, we conclude that the two definitions are equivalent.
\end{proof}

\section{Proof of Lemma \ref{lemma_2}}\label{appendix_A}

In this appendix, we provide the proof of Lemma \ref{lemma_2}, restated here for completeness.
    \begin{lemma*}
    The concept class $\mathcal{C}_n=\{c_n^{(j)}(x)=\mathbbm{1}\{x=j-1\}\}_{j=1}^{2^n}$ satisfies:

    \begin{enumerate}[$(i)$]
        \item $\mathcal{C}$ is not classically worst-case learnable.
             
        \item All Boolean functions $c_n^{(j)}$ can be efficiently computed by a classical algorithm.
    
        \item $|\mathcal{C}_n|= 2^{n}$ for all $n\in\mathbb{N}$.
    \end{enumerate}
    \end{lemma*}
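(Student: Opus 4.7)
Properties (ii) and (iii) are immediate and I dispatch them first: evaluating $c_n^{(j)}(x)=\mathbbm{1}\{x=j-1\}$ only needs a linear-time bit comparison between $x$ and the binary representation of $j-1$, and $|\mathcal{C}_n|=2^n$ holds by construction. The substantive content is (i), which I will establish by contradiction.

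My plan is to assume that some efficient randomized algorithm $\mathcal{A}$, a sequence of distributions $D=\{D_n\}_{n\in\mathbb{N}}$, and a polynomial $p(n)$ witness worst-case learnability, and then, for $n$ sufficiently large, exhibit two concepts whose training sets are statistically indistinguishable yet demand opposite labels on a common input. The central observation is a pigeonhole on probabilities: any distribution on $\{0,1\}^n$ can assign mass exceeding $1/(Cp(n))$ to at most $Cp(n)$ points, so the ``low-probability'' set $S_n:=\{x\in\{0,1\}^n:\mathbb{P}_{x\sim D_n}(x)\le 1/(Cp(n))\}$ has size at least $2^n-Cp(n)\ge 2$ once $n$ is large. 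I then pick distinct $j,k$ with $j-1,k-1\in S_n$ and focus on the event $E=\{j-1\notin\mathcal{T},\,k-1\notin\mathcal{T}\}$; a union bound gives $\mathbb{P}(E^c)\le 2/C$.

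The crux is a coupling: conditional on $E$, the training set under $c_n^{(j)}$ and the training set under $c_n^{(k)}$ are identically distributed, since both reduce to i.i.d.\ samples from $D_n$ restricted to $\{0,1\}^n\setminus\{j-1,k-1\}$ carrying all-zero labels (the unique positive example of each concept is excluded by $E$). Setting $P:=\mathbb{P}(\mathcal{A}(j-1,\mathcal{T})=1\mid E)$, which is then unambiguous under either concept, the learnability hypothesis applied at input $x=j-1$ yields two conflicting bounds: from $c_n^{(j)}$ (correct label $1$), $2/3\le \mathbb{P}(E^c)+P$, forcing $P\ge 2/3-2/C$; from $c_n^{(k)}$ (correct label $0$), $1/3\ge \mathbb{P}(E)\cdot P\ge (1-2/C)\,P$, forcing $P\le 1/[3(1-2/C)]$. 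Choosing $C=12$ gives $P\ge 1/2$ while $P\le 2/5$, the desired contradiction.

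The main obstacle I anticipate is simply the bookkeeping around the conditional-probability coupling and tuning the constant $C$ so the two derived bounds on $P$ are incompatible; everything else is routine. I also note that this template extends to Lemma \ref{lemma_4} by replacing ``identical distribution under $E$'' with a trace-distance estimate between $\ket{\psi_n^{(j)}}^{\otimes p(n)}$ and $\ket{\psi_n^{(k)}}^{\otimes p(n)}$ for $j-1,k-1\in S_n$, which is small whenever both amplitudes $\sqrt{\mathbb{P}_{D_n}(j-1)}$ and $\sqrt{\mathbb{P}_{D_n}(k-1)}$ are small.
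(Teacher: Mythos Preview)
Your proof is correct but takes a genuinely different route from the paper. The paper establishes (i) via an \emph{identification} argument: it first shows, by a Bayesian/MAP bound with uniform prior on concepts together with Jensen's inequality, that any algorithm (even computationally unbounded) that tries to output the index $j$ from a polynomial-size training set has error probability tending to $1$ as $n\to\infty$; it then argues that worst-case learnability would allow identification, because one can boost the per-input guarantee by majority vote across $k=O(n\log(1/\delta))$ independent training sets and union-bound over all $x\in\{0,1\}^n$ to read off the full truth table and hence $j$. Your argument is instead a direct Le~Cam two-point test: pigeonhole on probability mass yields two points $j-1,k-1$ of mass at most $1/(Cp(n))$, and conditional on the event $E$ that neither point is sampled the two training sets have \emph{identical} distribution (i.i.d.\ inputs from $D_n(\cdot\mid x\notin\{j-1,k-1\})$ with all-zero labels), so the single quantity $P=\mathbb{P}(\mathcal{A}(j-1,\mathcal{T})=1\mid E)$ must simultaneously satisfy the incompatible bounds you derive. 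Your approach is shorter and avoids the boosting-plus-listing detour; the paper's approach, in exchange, makes explicit the stronger quantitative fact that the identification error tends to $1$. Both proofs actually never use the efficiency of $\mathcal{A}$, so both give the information-theoretic version of (i). Your closing remark that the same template handles Lemma~\ref{lemma_4} by replacing ``identical conditional law'' with a trace-distance bound between $\ket{\psi_n^{(j)}}^{\otimes p(n)}$ and $\ket{\psi_n^{(k)}}^{\otimes p(n)}$ is exactly in line with the paper's own proof of that lemma.
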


\begin{proof}
    Conditions $(ii)$ and $(iii)$ follow trivially from the definition. Regarding condition $(i)$, note that unless  a pair $(x,1)$ appears in the training set, it is impossible to guess the correct concept with a probability greater than $1/(2^n-p(n))$. 

    Formally, for any algorithm $A$ (even those that can run in exponential time),
    \begin{align}
        p_e(A):&= \mathbb{E}_{c_n^{(j)}}\mathbb{E}_{\mathcal{T}_{p(n)}^{(j)}} \left[ \mathbb{P}\left(A(\mathcal{T}_{p(n)}^{(j)})\neq j \right)\right] \nonumber \\ & \geq \mathbb{E}_{c_n^{(j)}}\mathbb{E}_{\mathcal{T}_{p(n)}^{(j)}} \left[1- \max_{i} \mathbb{P}\left( c^{(i)}_n | \mathcal{T}_{p(n)}^{(j)}\right) \right] 
    \end{align}
    where the prior probability of each concept $c_n^{(j)}$ is assumed to be equal to $1/2^n$. In case there is a pair $(x,1)$ in the training set, then $\mathbb{P}\left( c^{(i)}_n | \mathcal{T}_{p(n)}^{(j)}\right)=\delta_{i,j}$, otherwise 
    \begin{equation}
        \mathbb{P}\left( c_n^{(i)} | \mathcal{T}_{p(n)}^{(j)}\right)= \left\{\begin{matrix}
        0 \text{    if }(i-1,0)\in \mathcal{T}_{p(n)}^{(j)} \\
        \frac{1}{2^n-p(n)} \text{ otherwise}
        \end{matrix}\right.
    \end{equation}
    Hence, 
    \begin{align}
        p_e(A) &\geq \mathbb{E}_{c_n^{(j)}} \left[\left(1-\frac{1}{2^n-p(n)}\right) \left(1-\mathbb{P}_{x\sim D_n}(j-1)\right)^{p(n)}\right] \noindent \\& \geq \left(1-\frac{1}{2^n-p(n)}\right) \left(1-\frac{1}{2^n}\right)^{p(n)}
    \end{align}
    where the first step uses the fact that the probability of the pair $(j-1,1)$ not appearing in the training set is $\left(1-\mathbb{P}_{x\sim D_n}(j-1)\right)^{p(n)}$, and substitutes $\mathbb{P}(c_n^{(i)}|\mathcal{T}_{p(n)}^{(j)})$. The second inequality uses Jensen's inequality. As
    \begin{equation}
        \lim_{n\rightarrow \infty} \left(1-\frac{1}{2^n}\right)^{p(n)}=1,
    \end{equation}
    for any algorithm, $p_e(A)\rightarrow 1$ as $n\rightarrow \infty$.

    Next, we assume that $\mathcal{C}$ is worst-case learnable. That is, there exists some algorithm $\mathcal{A}$, and polynomial $p(n)$, such that
    \begin{equation}
        \mathbb{P}\left(\mathcal{A}(x,\mathcal{T}_{p(n)}^{(j)})=c_n^{(j)}(x)\right)\geq 2/3
    \end{equation}
    for all $x\in\{0,1\}^n$, $c_n^{(j)} \in \mathcal{C}_n $, and $n\in \mathbb{N}$. If provided with $k$ training sets of size $p(n)$, running the algorithm with each training set and deciding based on a majority vote allows us to exponentially reduce the probability of error. The resulting probability of error is denoted as $c\cdot a^k$. 

    Therefore, the probability that given a training set of size $k\cdot p(n)$ the majority vote algorithm makes an error when listing all values $c_{n}^{(j)}(x)$ for $x\in\{0,1\}^n$ is upper bounded as follows. 
    \begin{equation}
        \mathbb{P}\left( \bigcup_{x\in \{0,1\}^n} {\mathcal{A}}'(x,\mathcal{T}_{kp(n)}^{(j)})\neq c_{n}^{(j)}(x)\right) \leq c \cdot 2^n a^k
    \end{equation}
    Taking $k=O(n\log\frac{1}{\delta})$, the probability of error is bounded by $\delta$. Thus, given a training set of size $O(n\,p(n)\log\frac{1}{\delta})$, we can identify (in exponential time) the concept $c_n^{(j)}$ with a probability of at least $1-\delta$. However, this contradicts the previous result, i.e., $p_e({A})\rightarrow 1$. Therefore, $\mathcal{C}$ is not worst-case learnable.

\end{proof}

\section{Proof of Lemma \ref{lemma_4}}\label{proof_lemma4}

In this appendix, we provide the proof of Lemma \ref{lemma_4}, restated here for completeness.

\begin{lemma*}
    The concept class $\mathcal{C}_n=\{c_n^{(j)}(x)=\mathbbm{1}\{x=j-1\}\}_{i=1}^{2^n}$ satisfies:

    \begin{enumerate}[$(i)$]
        \item $\mathcal{C}$ is not worst-case learnable by a quantum algorithm given quantum sampling advice.

        \item All Boolean functions $c_n^{(j)}$ can be efficiently computed by a classical algorithm.
    
        \item $|\mathcal{C}_n|= 2^{n}$ for all $n\in\mathbb{N}$.
    \end{enumerate}
\end{lemma*}

\begin{proof}
    Similarly to the proof of Lemma \ref{lemma_2}, if $\mathcal{C}_n$ is learnable from a polynomial number of copies of states $\ket{\psi_n^{(i)}}$, then states
    \begin{equation}
        \ket{\psi_n^{(i)}}=\sum_{x\in \{0,1\}^n} \sqrt{\mathbb{P}_{x\sim D_n}(x)}\ket{x}\otimes |c_n^{(i)}(x)\rangle
    \end{equation}
     should be distinguishable. Consequently, states $\ket{\psi_n^{(i)}}$ and $\ket{\psi_n^{(j)}}$, chosen to satisfy  $\left|\bra{\psi_n^{(i)}}\ket{\psi_n^{(j)}}\right|=\max_{p\neq q}\left|\bra{\psi_n^{(p)}}\ket{\psi_n^{(q)}}\right|$, are distinguishable with a polynomial number of samples. The optimal probability of error, given $N$ copies of one of these states, is expressed as
    \begin{equation}
        p_e^*=\max_{i,j:i\neq j}\frac{1}{2}\left(1-\frac{1}{2}\left \|\ket{\psi_n^{(i)}}\bra{\psi_n^{(i)}}^{\otimes N}-\ket{\psi_n^{(j)}}\bra{\psi_n^{(j)}}^{\otimes N}  \right \|_1 \right)
    \end{equation}
    Next, using that $\frac{1}{2}\left \| \rho-\sigma  \right \|_1= \sqrt{1-F(\rho,\sigma)}$ if $\rho$ and $\sigma$ are pure states, and $F(\rho^{\otimes N},\sigma^{\otimes N})=F(\rho,\sigma)^N$,
    \begin{equation}
        p_e^*= \frac{1}{2}\left(1-\sqrt{1- \max_{i,j:i\neq j} \left|\bra{\psi_n^{(i)}}\ket{\psi_n^{(j)}}\right|^{2N}}\right)
    \end{equation}
    Next, substituting states $\ket{\psi_n^{(j)}}$ and $\ket{\psi_n^{(i)}}$, for $i\neq j$,
    \begin{align}
        \bra{\psi_n^{(i)}}\ket{\psi_n^{(j)}}&= \sum_{x,x'} \sqrt{\mathbb{P}_{x\sim D_n}(x)\cdot\mathbb{P}_{x\sim D_n}(x')}\left(\bra{x}\otimes \langle c_n^{(i)}(x)|\right) \left(|\ket{x'}\otimes |c_n^{(j)}(x')\rangle\right) \nonumber \\&=\sum_{x\in \{0,1\}^n}\mathbb{P}_{x\sim D_n}(x)  \langle c_n^{(i)}(x)|c_n^{(j)}(x)\rangle
        =\sum_{x\in \{0,1\}^n}\mathbb{P}_{x\sim D_n}(x)  \bra{\delta_{x,i-1}}\ket{\delta_{x,j-1}} \nonumber \\ &= 1- \mathbb{P}_{x\sim D_n}(i-1)- \mathbb{P}_{x\sim D_n}(j-1)
    \end{align}
    where the third equality follow from the definition of the concepts $c_n^{(j)}(x)$.
    Using this result,
    \begin{align}
        \max_{i,j:i\neq j} \left|\bra{\psi_n^{(i)}}\ket{\psi_n^{(j)}}\right|^{2} &= \left( 1- \min_{i,j:i\neq j} \Big( \mathbb{P}_{x\sim D_n}(i-1)+ \mathbb{P}_{x\sim D_n}(j-1) \Big) \right)^2\nonumber \\ &\geq  \left(1- \frac{2}{2^n} \right)^2
    \end{align}
    The inequality follows from the fact that the sum of the probabilities of the two least probable outcomes in a distribution over $\{0,1\}^n$ is upper-bounded by $2/2^n$. Consequently,
    \begin{equation}
        N\geq \frac{\log \left( 1- (1-2p_e^*)^2 \right)}{2 \log \left( 1- \frac{2}{2^n}\right)} \geq  \frac{-2^n\log \left( 1- (1-2p_e^*)^2 \right)}{4}
    \end{equation}
    which contradicts the statement that the states can be distinguished with a polynomial number of samples. Therefore, the concept class $\mathcal{C}_n$ cannot be learned by a quantum algorithm with quantum sampling advice.

\end{proof}

\section{Proof of Theorem \ref{theorem_worst_case}}\label{proof_theorem3}

In this appendix, we provide the proof of Theorem \ref{theorem_worst_case}, restated here for completeness.

\begin{theorem*}
    The condition $\mathsf{BQP}\, \not\subset \, \mathsf{P/poly}$ is sufficient for the existence of a concept class $\mathcal{C}$ that is (worst-case) learnable by a quantum procedure but is not classically learnable. Additionally, the condition $\mathsf{PromiseBQP} \, \not\subset \, \mathsf{PromiseP/poly}$ is a necessary condition for the existence of a concept class exhibiting the aforementioned property.
\end{theorem*}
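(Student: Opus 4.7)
The plan is to prove the two directions independently.

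\emph{Sufficiency.} I would take any $L\in\textnormal{\textbf{BQP}}\setminus\textnormal{\textbf{P/poly}}$ guaranteed by the hypothesis and build the trivial singleton concept class $\mathcal{C}_n:=\{c_n\}$ with $c_n(x):=\mathbbm{1}\{x\in L\}$. Worst-case quantum learnability is immediate: the \textnormal{\textbf{BQP}} decider for $L$ outputs $c_n(x)$ with probability $\geq 2/3$ while ignoring the training set entirely, so the definition of learnability is satisfied with any sequence of distributions and any polynomial. For non-classical-learnability I would argue by contradiction---a classical learner for $\mathcal{C}$ with distribution $D$ and training-set polynomial $p(n)$ would place $L\in\textnormal{\textbf{BPP/samp}}$, and by Theorem~\ref{theorem_1} (which gives $\textnormal{\textbf{BPP/samp}}=\textnormal{\textbf{P/poly}}$) this contradicts $L\notin\textnormal{\textbf{P/poly}}$.

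\emph{Necessity (contrapositive).} Assume $\textnormal{\textbf{PromiseBQP}}\subseteq\textnormal{\textbf{PromiseP/poly}}$ and that $\mathcal{C}$ is quantum-learnable via $(\mathcal{Q},D_Q,p)$; my goal is to produce an advice-free classical learner for $\mathcal{C}$, contradicting non-classical-learnability. Define the promise problem $\Pi$ with $(1^n,x,T)\in\Pi_{\mathrm{yes}}$ iff $\Pr[\mathcal{Q}(x,T)=1]\geq 2/3$, and analogously for $\Pi_{\mathrm{no}}$. Running $\mathcal{Q}$ directly witnesses $\Pi\in\textnormal{\textbf{PromiseBQP}}$, so by hypothesis there exists a polynomial-time classical machine $\mathcal{A}'$ with polynomial advice $\{a_{m(n)}\}$ deciding $\Pi$ on promised instances. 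After amplifying $\mathcal{Q}$ and applying a union bound over $x$, with high probability over $T_0\sim D_Q^{k\,p(n)}$ (labeled by $c$) the pair $(x,T_0)$ lies inside the promise of $\Pi$ simultaneously for every $x\in\{0,1\}^n$, and consequently $\mathcal{A}'(x,T_0,a_{m(n)})=c(x)$ for every $x$ whenever $T_0$ is ``good''.

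The main engine of the reduction is emulating the advice $\{a_{m(n)}\}$ by samples, in the spirit of the proof of Theorem~\ref{theorem_1}. I would pass to the padded class $\tilde{\mathcal{C}}$ on $\{0,1\}^{n+1}$ defined by $\tilde{c}(0,y):=c(y)$ and $\tilde{c}(1,z):=0$, and choose the training distribution to be $\tilde{D}_L:=\tfrac{1}{2}(0,D_Q)+\tfrac{1}{2}(1,D_{\mathrm{adv}})$, where $D_{\mathrm{adv}}$ is the Theorem~\ref{theorem_1} encoding of $a_{m(n+1)}$. The first bit of each sample perfectly partitions a polynomial training set into a $D_Q$-distributed half $T_0=\{(y_i,c(y_i))\}_i$ with honest labels of $c$ and an advice half whose second-coordinate empirical frequencies recover $\hat{a}_{m(n+1)}$ by the Theorem~\ref{theorem_1} decoder; the padding convention $\tilde{c}(1,z)=0$ guarantees that the advice half's labels carry no information about $c$ and do not interfere with decoding. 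The learner then outputs $\mathcal{A}'((0,y),T_0,\hat{a}_{m(n+1)})$ when $x=(0,y)$ and $0$ when $x=(1,z)$, which is a bona fide advice-free classical learner for $\tilde{\mathcal{C}}$; restricting to inputs of the form $(0,y)$ yields a classical learner for $\mathcal{C}$, completing the contradiction.

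The hardest part I anticipate is ensuring that the advice-decoding and the quantum learner's distributional guarantee do not clash. Without the padding step one would have to mix $D_{\mathrm{adv}}$ with $D_Q$ on the same $\{0,1\}^n$, and because $D_Q$ may deposit arbitrary mass on the support used by the Theorem~\ref{theorem_1} encoding, both the decoder's bias estimate and the distributional guarantee of $\mathcal{Q}$ on the extracted $T_0$ would be corrupted. Reserving the first-bit-$1$ block of $\{0,1\}^{n+1}$ exclusively for the advice---with the padded concept fixed to $0$ there so labels remain well defined---is exactly what decouples these two roles of the training set, permits a clean decoding of $\hat{a}_{m(n+1)}$, and preserves $T_0$ as a genuinely $D_Q$-sampled training set so that the promised guarantee of $\mathcal{A}'$ applies.
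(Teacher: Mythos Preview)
Your sufficiency argument is correct and coincides with the paper's.

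For the necessity direction you have the right skeleton---define a promise problem from the quantum learner, place it in \textnormal{\textbf{PromiseBQP}}, assume it lies in \textnormal{\textbf{PromiseP/poly}}, and smuggle the resulting advice into the training distribution---but the padding manoeuvre does not close the argument. What you actually build is a classical learner for $\tilde{\mathcal{C}}$ on $\{0,1\}^{n+1}$, using a training set drawn from $\tilde D_L$ on $\{0,1\}^{n+1}$ and labeled by $\tilde c$. A classical learner for $\mathcal{C}$, by definition, must use a training set $\{(y_i,c(y_i))\}$ with $y_i$ drawn from some distribution on $\{0,1\}^n$. Your sentence ``restricting to inputs of the form $(0,y)$ yields a classical learner for $\mathcal{C}$'' silently assumes that the $\tilde{\mathcal{C}}$-training set can be manufactured from a $\mathcal{C}$-training set; but the first-bit-$1$ samples you need are $((1,z_j),0)$ with $z_j\sim D_{\mathrm{adv}}$, and generating those requires sampling $D_{\mathrm{adv}}$, which encodes exactly the advice string the learner does not have. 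Equivalently, you would need ``$\tilde{\mathcal{C}}$ classically learnable $\Rightarrow$ $\mathcal{C}$ classically learnable'', and that implication is not obvious (only the converse is).

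The paper faces the same tension you flag in your last paragraph, but resolves it without leaving $\{0,1\}^n$. It observes that any distribution on $\{0,1\}^n$ has a set $S$ of size $h(n)$ with $D_Q(S)\le h(n)/2^n$, and encodes the advice by \emph{boosting} the mass on $S$ inside a perturbed distribution $D_{\mathrm{adv},n}$ on the original domain. Three short calculations then finish the job: (i) polynomially many samples suffice to recover $a_n$ from empirical frequencies on $S$; (ii) polynomially many samples land outside $S$ with high probability; (iii) the conditional law $D_{\mathrm{adv},n}(\cdot\mid x\notin S)$ is within exponentially small total variation of $D_Q$, so $\mathcal{A}_Q$'s guarantee on $T_0$ survives the perturbation. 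Because everything stays on $\{0,1\}^n$ and labels are $c(x_i)$ throughout, the output is a bona fide classical learner for $\mathcal{C}$ itself, which is precisely the contradiction you need. Your padding idea sidesteps the interference cleanly, but at the cost of moving the learner to the wrong concept class; the paper's low-mass-set trick is the piece that makes the argument land.
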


\begin{proof}
    First, we prove the sufficient part of the statement. Therefore, we assume $\mathsf{BQP} \not\subset \mathsf{P/poly}$, which implies the existence of a language $L \in \mathsf{BQP}$ such that $L \notin \mathsf{P/poly}$. Then, a concept class formed only by the concept $c_n(x) = \mathbbm{1}\{x \in L\}$ is quantum learnable by definition. However, it is not classically learnable under any distribution since $\mathsf{P/poly} = \mathsf{BPP/samp(All)}$.

    Now, we move to the necessary part. That is, we assume the existence of a concept class $\mathcal{C}$ that is (worst-case) quantum learnable but not classically learnable. This assumption implies the existence of a triplet $(\mathcal{A}_Q, D, p)$ such that
    \begin{equation}
        \mathbb{P}\left( \mathcal{A}_Q(x,\mathcal{T}_{p(n)}^{(j)})=c_n^{(j)}(x)\right)\geq 1-\delta_0
    \end{equation}
    for all $x \in \{0,1\}^n$, $c_n^{(j)}\in \mathcal{C}_n$, and $n\in \mathbb{N}$, where $\mathcal{A}_Q$ is an efficient quantum algorithm, and $\delta_0$ denotes a small constant. Similarly, for any classical algorithm, distribution and polynomial
    \begin{equation}
        \mathbb{P}\left( \mathcal{A}_C(x,\mathcal{T}_{q(n)}^{(j)})=c_n^{(j)}(x)\right)< \frac{1}{2}+\delta_0
    \end{equation}
    for some $x \in \{0,1\}^n$, and $c_n^{(j)}$. 

    From this assumption, the objective is to construct a promise problem $L=(L_{\mathrm{YES}},L_{\mathrm{NO}})$ such that $L \in \mathsf{PromiseBQP}$, and $L \notin \mathsf{PromiseP/poly}$. The proposed problem is motivated by the fact that there is no classical algorithm capable of simulating $\mathcal{A}_Q$ on all inputs. Otherwise, the concept class would be classically learnable. The specific definition is:

    \begin{align}
        L_{\mathrm{YES}}=\bigg\{ \left< x,\mathcal{T}_{p(n)}\right> : \argmax_{b\in\{0,1\}} \left \{  \mathbb{P}(\mathcal{A}_Q(x, \mathcal{T}_{p(n)})=b)\right\}=1 \nonumber \\ \text{ and } \max_{b\in\{0,1\}}  \mathbb{P}(\mathcal{A}_Q(x, \mathcal{T}_{p(n)})=b)\geq 1/2+\epsilon\bigg \}
    \end{align}
    and  
    \begin{align}
        L_{\mathrm{NO}}=\bigg\{ \left< x,\mathcal{T}_{p(n)}\right> : \argmax_{b\in\{0,1\}} \left \{  \mathbb{P}(\mathcal{A}_Q(x, \mathcal{T}_{p(n)})=b)\right\}=0 \nonumber \\ \text{ and } \max_{b\in\{0,1\}}  \mathbb{P}(\mathcal{A}_Q(x, \mathcal{T}_{p(n)})=b)\geq 1/2+\epsilon\bigg \}
    \end{align}
    where the probability is taken with respect to the internal randomness of algorithm $\mathcal{A}_Q$, and $\epsilon$ denotes a small constant. The factor $\max_{b\in\{0,1\}}  \mathbb{P}(\mathcal{A}_Q(x, \mathcal{T}_{p(n)})=b)\geq 1/2+\epsilon$  guarantees that we can distinguish in a reasonable time wheter $\left< x,\mathcal{T}_{p(n)}\right>$ belongs to $L_{\mathrm{YES}}$ or $L_{\mathrm{NO}}$, as long as it belongs to one of them.

    Clearly, $L \in \mathsf{PromiseBQP}$ since running $\mathcal{A}_Q$ multiple times allows us to choose $b \in \{0,1\}$ as the most frequent output, ensuring the correct outcome with high probability.

    Now, we show that $L \notin \mathsf{PromiseP/poly}$. To do so, we use a contradiction argument. That is, we start by assuming that $L \in \mathsf{PromiseP/poly}$, meaning there exists a Turing machine $M$ and a sequence of strings $a_n$ such that $M(x, \mathcal{T}_{p(\ell(x))}, a_{\ell(x)})$ solves problem $L$. Let polynomial $h(n)$ be the length of $a_n$.

    In a similar spirit than in the proof of Theorem 1, we encode advice $a_n$ into the distribution $D_n$ without affecting to many entries. In particular, note that for any distribution $D_n$ over $\{0,1\}^n$, there exists a set of $S\subset\{0,1\}^n$ with cardinality $h(n)$ such that 
    \begin{equation}
        \sum_{x\in S} \mathbb{P}_{X\sim D_n}(x) \leq \frac{h(n)}{2^n}
    \end{equation}
    otherwise $\sum_{x} \mathbb{P}_{X\sim D_n}(x) > 1$. The idea of the encoding is simple: first, we order the elements of $S$ by interpreting the strings of $n$ bits as integer values. Starting with the smallest value in $S$, we encode the first bit of $a_n$ by increasing the probability by $p$ or $p+p_1$ depending on the bit, and then proceed to the second bit and so on. The probability $p$ is introduced to allow the detection of the set $S$, i.e., the positions containing the bits of advice $a_n$. The encoding makes the following transformation to distribution $D_n$,
    \begin{equation}
        \mathbb{P}_{X\sim D_{\mathrm{adv},n}}(x) =\frac{1}{C}\left( \frac{1}{3h(n)}\mathbb{P}_{X\sim D}(x)+\mathbbm{1}(x\in S) \frac{2+b(x)}{3h(n)}\right)
    \end{equation}
    where $b(x)$ denotes the specific bit that has to be encoded into string $x\in S$, and $C$ is a normalization constant, which satisfies that $2/3\leq C\leq 4/3$. The particular value of $C$ depends on the Hamming weight of $a_n$. 
    
    \smallskip
    \noindent Now, we prove three properties of this distribution:

    \begin{enumerate}[$(i)$]
        \item There exists a polynomial $q_1(n)$ such that sampling $q_1(n)$ times the distribution $D_{\mathrm{adv},n}$ allows the advice $a_n$ to be decoded with high probability.  
        \item There exists a polynomial $q_2(n)$ such that sampling $q_2(n)$ times the distribution $D_{\mathrm{adv},n}$, we obtain with high probability at least $p(n)$ samples where $x \notin S$.
        \item The total variation between distributions $\mathbb{P}_{X\sim D_n}(x)$ and $\mathbb{P}_{X\sim D_{\mathrm{adv},n}}(x|x\notin S)$ decays exponentially. 
        
    \end{enumerate}

\smallskip\smallskip
    To prove property $(i)$, Hoeffding's inequality is used, which states that
    \begin{equation}
        \mathbb{P}\left(\left| \frac{1}{N}\sum_{i=1}^N{\mathbbm{1}\{x_i=z\}} - \mathbb{P}_{X\sim D_\mathrm{adv}}(z) \right|\leq \epsilon\right)\geq 1-\delta
    \end{equation}
    for $N=\frac{1}{2\epsilon^2} \log\frac{2}{\delta}$. In particular, we take $\epsilon=  \left(\frac{3}{24 h(n)}\right)^2 \leq \left(\frac{1}{6C \,h(n)}\right)$. This guarantees that we can differentiate (for a fix value $x$) whether $x\in S$ or $x\notin S$. Furthermore, if $x\in S$, we can distinguish the encoded bit. Next, using the union bound,
    \begin{equation}
        \mathbb{P}\left(\bigcup_{z\in\{0,1\}^n}\left| \frac{1}{N}\sum_{i=1}^N{\mathbbm{1}\{x_i=z\}} - \mathbb{P}_{X\sim D_\mathrm{adv}}(z) \right|> \epsilon\right)< 2^n\delta
    \end{equation}
    Hence, by setting $\delta=\delta'/2^n$, we ensure that with probability at least $1-\delta'$, all empirical probabilities deviate by at most $\epsilon$. Consequently, with probability at least $1-\delta'$, we can correctly decode the advice $a_n$. Note that after substituting the values of $\epsilon$ and $\delta$, the number of samples $N$ remains a polynomial in $n$.

    For property $(ii)$, we conclude that $N=16\, p(n) h(n)$ samples from $D_{\mathrm{adv},n}$ are sufficient. To prove this, we use Chebyshev's inequality, with the random variable $Z:=\sum_{i=1}^N{\mathbbm{1}\{x_i\notin S\}}$,
    \begin{equation}
        \mathbb{P}\left(\left| Z - \mathbb{E} [Z] \right| \geq  K \sigma \right)\leq \frac{1}{K^2}
    \end{equation}
    Since $\mathbbm{1}\{x_i\notin S\}$ follows a Bernoulli distribution with parameter $\mathbb{P}_{X\sim D_{\mathrm{adv},n}}(x\notin S)$,
    \begin{align}
    \mathrm{Var}[Z] &= N \mathbb{P}_{X\sim D_\mathrm{adv}}(x\notin S) (1-\mathbb{P}_{X\sim D_\mathrm{adv}}(x\notin S)) \leq  N \mathbb{P}_{X\sim D_\mathrm{adv}}(x\notin S) \nonumber \\ & \leq \frac{N}{2 h(n)}=8p(n)
    \end{align}
    where for the second inequality we use that $\mathbb{P}_{X\sim D_\mathrm{adv}}(x\notin S)\leq \frac{1}{2 h(n)}$. Therefore,
    \begin{equation}
        \mathbb{P}\left(\left| Z - \mathbb{E} [Z] \right| \geq K \sqrt{8p(n)} \right)\leq \frac{1}{K^2}
    \end{equation}
    Next, taking $K=\sqrt{p(n)/8}$
    \begin{equation}\label{eq_bound_Z}
        \mathbb{P}\left(\left| Z - \mathbb{E} [Z] \right| \geq  p(n) \right)\leq \frac{8}{p(n)}
    \end{equation}
    Now, we lower bound $\mathbb{P}_{X\sim D_{\mathrm{adv},n}}(x\notin S)$,
    \begin{align}
        \mathbb{P}_{X\sim D_{\mathrm{adv},n}}(x\notin S) &= \frac{1}{3Ch(n)}\mathbb{P}_{X\sim D_n}(x\notin S) \nonumber \\ &\geq \frac{1}{3Ch(n)} \left( 1-\frac{h(n)}{2^n} \right) \nonumber \\ & \geq \frac{1}{4h(n)} \left( 1-\frac{h(n)}{2^n} \right)\geq \frac{1}{8 h(n)}
    \end{align}
    where the first inequality follows from the definition of $S$, the second one from bounding $C$, and the last one holds for a sufficiently large $n$. This inequality implies that $\mathbb{E}[Z]=N \mathbb{P}_{X\sim D_{\mathrm{adv},n}}(x\notin S) \geq 2p(n)$, which together with \eqref{eq_bound_Z}, 
    \begin{equation}
        \mathbb{P}\left(Z \geq  p(n) \right)\geq 1- \frac{8}{p(n)}
    \end{equation}
    which ends the proof of property $(ii)$.

    Next, we prove the last property, i.e., we bound the total variation between $\mathbb{P}_{X\sim D_n}(x)$ and $\mathbb{P}_{X\sim D_{\mathrm{adv},n}}(x|x\notin S)=\mathbb{P}_{X\sim D_n}(x|x\notin S)$, where the equality follows from the definition of distribution $D_{\mathrm{adv},n}$. For simplicity, we denote the latter distribution as $\tilde{D}_n$.
    \begin{align}
        2 d_{TV}(D_n,\tilde{D}_n ) &= \sum_{x\in\{0,1\}^n}\left|\mathbb{P}_{X\sim D_n}(x)-\mathbb{P}_{X\sim \tilde{D}_n}(x)\right|\nonumber \\& = \left(\frac{1}{\mathbb{P}_{X\sim D_n}(x\notin S)} -1 \right)\mathbb{P}_{X\sim D_n}(x\notin S)+\mathbb{P}_{X\sim D_n}(x\in S) \nonumber \\ & \leq \left(\frac{h(n)}{2^n-h(n)} \right)+ \frac{h(n)}{2^n} \leq \frac{2 \,h(n)}{2^n-h(n)} 
    \end{align}
    where the inequality follows from $\mathbb{P}_{X\sim D_n}(x\in S)\leq h(n)/2^n$. To distinguish between these distribution with an average probability of error $p_e$ (the prior is 1/2 for each hypothesis), we need at least 
    $N \geq \frac{-4 p_e+2}{ d_{TV}(D_n,\tilde{D}_n )}$
    samples \cite{lehmann1986testing}. The lower bound grows exponentially with $n$. This implies that when using distribution $\tilde{D}_n$, 
    \begin{equation}
        \mathbb{P}\left( \mathcal{A}_Q(x,\mathcal{T}_{p(n)}^{(j)})=c_n^{(j)}(x)\right)\geq 1-\delta_0-\Delta
    \end{equation}
    for all $x \in \{0,1\}^n$, $c_n^{(j)}\in \mathcal{C}_n$, and $n\in \mathbb{N}$, where $\Delta$ denotes a small constant. To prove this, let's assume that the latter inequality does not hold. Then, there exist $x$, and $c_n^{(j)}$ such that 
    \begin{equation}
        \mathbb{P}\left( \mathcal{A}_Q(x,\mathcal{T}_{p(n)}^{(j)})=c_n^{(j)}(x)\right)<1-\delta_0-\Delta
    \end{equation}
    Therefore, since when the training set is distributed according to $D_n$, the algorithm $\mathcal{A}_Q$ satisfies,
    \begin{equation}
        \mathbb{P}\left( \mathcal{A}_Q(x,\mathcal{T}_{p(n)}^{(j)})=c_n^{(j)}(x)\right)\geq 1-\delta_0
    \end{equation}
    we could distinguish both distributions with $O(1/\Delta^2)$ repetitions, i.e., using $O(p(n)/\Delta^2)$ samples. This is contradiction with the previous result.

    In a nutshell, there exists a sufficiently large polynomial $q(n)$ such that: $(i)$ We can reliably decode the advice $a_n$, and $(ii)$ the number of samples satisfying $x\notin S$ is greater than $p(n)$ with a high probability. Additionally, $\mathcal{A}_Q$ successfully learns the concept class $\mathcal{C}$ using the sequence of distributions $\tilde{D}$. 

    \smallskip\smallskip
    
    \noindent
    Now, we propose a classical algorithm that learns the concept class $\mathcal{C}$:
    
    \begin{itemize}
        \item The number of samples in the training set is set to the polynomial $q(n)$, and the distribution used is $D_{\mathrm{adv,n}}$. The polynomial $q(n)$ is chosen to be sufficiently large, ensuring reliable decoding of the advice with high probability, and also guaranteeing with high probability that there are more than $p(n)$ samples for which $x\notin S$.

        \item Next, from the training set $\mathcal{T}_{q(n)}^{(j)}(D_{\mathrm{adv},n})$, we extract the advice $a_n$ and form a new training set $\mathcal{T}_{p(n)}^{(j)}(\tilde{D}_{n})$.

        \item Finally, we execute the Turing machine $M$ with input $(x, \mathcal{T}_{p(n)}^{(j)}(\tilde{D}_{n}), a_n)$, which yields the most probable output of $\mathcal{A}_Q(x, \mathcal{T}_{p(n)}^{(j)}(\tilde{D}_{n}))$. By its definition, it outputs $c_n^{(j)}(x)$ with probability for any $x$, $c_n^{(j)}$, and $n\in\mathbb{N}$.
    \end{itemize}

    \noindent
    Therefore, a contradiction arises, as this would imply that the concept class is classically worst-case learnable. Consequently, $L\notin \,\mathsf{PromiseP/poly}$.

\end{proof}

\section{Proof of Theorem \ref{theorem_average_case}}\label{proof_average}

In this section, we prove Theorem \ref{theorem_average_case}. First, we start by introducing the formal definitions of $\mathsf{HeurBPP/samp}$($D$) and $\mathsf{HeurBQP/samp}$($D$).

\begin{definition}
    $\mathsf{HeurBPP/samp}$\textnormal{(${D}$)} is the class of distributional problems $(L,{D}')$ that satisfy:
    \begin{enumerate}[$(i)$]
        \item ${D}'$ equals ${D}$,
        
        \item there exists a probabilistic TM $M$ running in polynomial time and a polynomial $p(n,m)$, such that for the fixed sequence of distributions ${D}=\{D_n\}_{n\in \mathbb{N}}$, 
    \begin{equation}
        \mathbb{P}_{x\sim D_n} \left(\mathbb{P}_{\mathcal{T}_{p(\ell(x),m)},\, M}\left(M(x,\mathcal{T}_{p(\ell(x),m)},1^m)=\mathbbm{1}\{x\in L\}\right)\geq \frac{2}{3} \right)\geq 1-\frac{1}{m}
    \end{equation}
    for any $n,m\in \mathbb{N}$, where $\mathcal{T}_{p(n,m)}=\{(x_i,\mathbbm{1}\{x_i\in L\})\}_{i=1}^{p(n,m)}$, and $x_i\sim D_n$. The external probability is taken with respect to $x\sim D_n$, and the internal probability with respect to the training set $\mathcal{T}_{p(n,m)}$, and the internal randomness of  TM $M$.
    \end{enumerate}
\end{definition}

The definition of $\mathsf{HeurBQP/samp}$($D$) is analogous to that of $\mathsf{HeurBPP/samp}$($D$), with the probabilistic Turing machine $M$ replaced by a quantum algorithm running in polynomial time. With these definitions introduced, we can proceed with the proof.

\begin{theorem*}
    The condition $\mathsf{HeurBPP/samp}(D)\,\subsetneq\,\mathsf{HeurBQP/samp}$$(D)$ is a necessary and sufficient condition for the existence of a concept class $\mathcal{C}$ that is average-case learnable by a quantum procedure for sequence $D$ but it is not classically average-case learnable for sequence $D$.
\end{theorem*}

\begin{proof}
    First, we prove the sufficient part of the statement. Therefore, we assume $\mathsf{HeurBPP/samp}(D)\,\subsetneq\,\mathsf{HeurBQP/samp}$$(D)$, which implies the existence of a distributional problem $(L,D) \in \mathsf{HeurBQP/samp}(D)$ such that $(L,D)$ does not belong to $\mathsf{HeurBPP/samp}$$(D)$. Then, a concept class formed only by the concept $c_n(x) = \mathbbm{1}\{x \in L\}$ is average-case quantum learnable for sequence $D$ and not classically learnable. 

    For the necessary part, we start by assuming that exists some concept class $\mathcal{C}$ such that it is average-case quantum learnable for distribution $D$ but not classically learnable. As in Lemma \ref{lemma_1}, we are going to transform this concept class into a language $L(\mathcal{C}):=\{x:c_{\ell(x)}^{\left(g(\ell(x))\right)}(x)=1\}$. For the definition of function $g(n)$, first, we need to modify slightly the notation used in Lemma \ref{lemma_1}. Specifically, given an algorithm $\mathcal{A}$, a sequence of distributions $D$, and a polynomial $p(n,m)$, we define the set $\tilde{\mathcal{E}}_n(\mathcal{A},D,p)\subseteq \mathcal{C}_n$, as the set of concepts $c_n^{(j)}$  that satisfy
    \begin{equation}
            \mathbb{P}_{x\sim D_n}\left( \mathbb{P}_{\mathcal{T}_{p(n,m)},\textnormal{}\mathcal{A}}\left(\mathcal{A}(x,\mathcal{T}_{p(n,m)}^{(j)},1^{m})=c_n^{(j)}(x)\right)\geq \frac{2}{3}\right)< 1-\frac{1}{m}
    \end{equation}   
    for some $m\in \mathbb{N}$. Therefore, if $\mathcal{C}$ is not classically average-case learnable for sequence $D$, then any pair $(\mathcal{A},p)$ satisfies that $\mathcal{E}_n(\mathcal{A},D,p)\neq\emptyset$ for some values $n\in \mathbb{N}$. The set of values of $n\in \mathbb{N}$ for which this occurs is denoted by
    \begin{equation}
        \tilde{\mathcal{N}}(\mathcal{A},D,p):=\{n\in \mathbb{N}: \tilde{\mathcal{E}}_n(\mathcal{A},D,p)\neq \emptyset\}
    \end{equation}
    Let $B_{\mathrm{alg}}$ denote a bijection between $\mathbb{N}$ and the set of efficient classical algorithms, and $B_{\mathrm{poly}}$ represent a bijection between $\mathbb{N}$ and the set of polynomials $\{c\,n^{k_1} m^{k_2}:(c,k_1,k_2)\in\mathbb{N}^3\}$. Now, we can define function $g(n)$:
    \begin{equation}
         g(n):=\left\{\begin{matrix}1 \text{ if }n\notin\{n_i\}_{i=1}^{\infty}\\ \min \{j: c_n^{(j)}\in \tilde{\mathcal{E}}_n\left(B_{\mathrm{alg}}(s_{i^*(n)}),D,B_{\mathrm{poly}}(z_{i^*(n)})\right)\}\text{ if }n\in\{n_i\}_{i=1}^{\infty}
    \end{matrix}\right.
    \end{equation}
    where sequences $s_i\in \mathbb{N}$ and $z_i\in \mathbb{N}$ satisfy that any point $(a,b)\in \mathbb{N}^2$ appears at least once in the sequence $(s_i,z_i)$, $n_1:=\min\{\tilde{\mathcal{N}}\left(B_{\mathrm{alg}}(s_{1}),D,B_{\mathrm{poly}}(z_{1})\right) \}$, and $n_i=\min\{\tilde{\mathcal{N}}\left(B_{\mathrm{alg}}(s_{i}),D,B_{\mathrm{poly}}(z_{i})\right) \backslash\{n_1,n_2,\cdots,n_{i-1}\} \}$. Finally, $i^*(n)$ denotes the index $i$ such that $n_i=n$. As in Lemma \ref{lemma_1}, this construction of $L(\mathcal{C})$ implies that the distributional problem $(L(\mathcal{C}),D)$ is not in $\mathsf{HeurBPP/samp}$$(D)$. However, since the concept class is quantum average-case learnable for sequence $D$, then $(L(\mathcal{C}),D) \in \mathsf{HeurBQP/samp}$$(D)$.

\end{proof}

\end{appendices}

\end{document}